\documentclass{article}
\usepackage{spconf,amsmath,graphicx}

\usepackage[utf8]{inputenc}
\usepackage{url}
\usepackage{booktabs}
\usepackage{amsfonts}
\usepackage{microtype}
\usepackage[table]{xcolor}

\usepackage{multicol, multirow}
\usepackage{makecell}
\usepackage{array}

\usepackage{kotex}
\usepackage{pifont}
\usepackage{ctable}

\usepackage{arydshln}
\usepackage{amsthm, amssymb}
\usepackage{algorithm}

\usepackage{listings}

\DeclareMathOperator*{\argmax}{\arg\hspace{-0.07em}\max}

\newcommand{\xmark}{\ding{55}}
\newcommand{\cmark}{\ding{51}}
\newcommand\norm[1]{\left\lVert#1\right\rVert}

\newtheorem{lem}{Lemma}

\newcommand{\boldparagraph}[1]{\noindent\textbf{#1}\enspace}

\usepackage{stfloats}

\usepackage[skip=5pt]{caption}

\title{LoREnc: Low-Rank Encryption for \\ Securing Foundation Models and LoRA Adapters}

\name{
  Beomjin Ahn$^{1\,*}$,
  Jungmin Kwon$^{3\,\dagger}$,
  Chanyong Jung$^{4\,\S}$,
  Jaewook Chung$^{2}$
}

\address{
  $^{1}$Samsung Research,
  $^{2}$Samsung Electronics,
  $^{3}$Amazon Web Services,
  $^{4}$University of Michigan
}

\begin{document}
\maketitle

\begingroup
\renewcommand{\thefootnote}{\fnsymbol{footnote}}
\footnotetext[1]{Corresponding author. E-mail: \texttt{beomjin.ahn@samsung.com}}
\footnotetext[2]{This work was completed before the author joined Amazon.}
\footnotetext[4]{Work done while at Samsung Research.}
\endgroup

\begin{abstract}
  Foundation models and low-rank adapters enable efficient on-device generative AI but raise risks such as intellectual property leakage and model recovery attacks. Existing defenses are often impractical because they require retraining or access to the original dataset. We propose LoREnc, a training-free framework that secures both FMs and adapters via spectral truncation and compensation. LoREnc suppresses dominant low-rank components of FM weights, compensates for the missing information in authorized adapters, and further applies orthogonal reparameterization to obscure structural fingerprints of the protected adapter. Unauthorized users produce structurally collapsed outputs, while authorized users recover exact performance. Experiments demonstrate that LoREnc provides strong protection against model recovery with under 1\% computational overhead.
\end{abstract}
\begin{keywords}
  Generative AI, Foundation Models, LoRA, Parameter-Efficient Fine-Tuning
\end{keywords}

\begin{figure*}[b]
  \centering
  \includegraphics[width=1.0\linewidth]{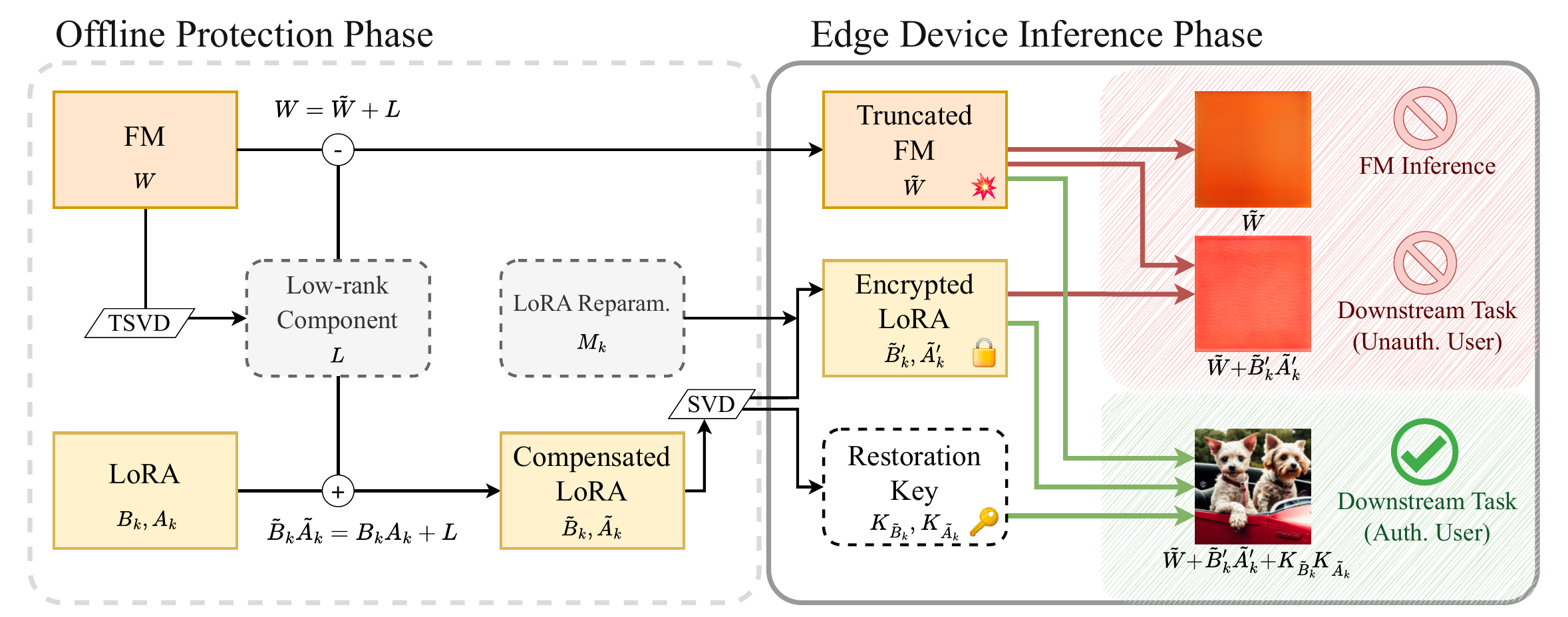}
  \caption{Overview of the LoREnc pipeline. The framework protects FMs by relocating dominant spectral components to LoRA adapters, preventing unauthorized use (visualized as structural collapse) while enabling numerically exact recovery.}
  \label{fig:overall}
\end{figure*}

\section{Introduction}
\label{sec:intro}

Foundation models (FMs) can be adapted to many downstream tasks, improving the practical usability of large-scale models. Parameter-Efficient Fine-Tuning (PEFT) methods are widely adopted for this purpose~\cite{DBLP:journals/tmlr/HanGL0Z24}, and LoRA~\cite{edward2021} is a de facto standard due to its simplicity and broad tooling support. However, releasing FMs also introduces risks: weights can enable unauthorized inference or partial recovery of proprietary models, making exposure especially harmful. Existing protection mechanisms offer limited practical guarantees in this setting. Passive approaches focus on ownership verification rather than preventing unauthorized use. More recent methods attempt to prevent extraction or misuse by modifying or hiding deployed weights, but typically require expensive retraining or still assume reversible parameters are deployed to edge devices. Full-model encryption is also impractical in this setting: runtime decryption of an entire FM requires loading the plaintext model into device memory at inference time, negating the efficiency constraints that define edge deployment.

To address these limitations, we propose LoREnc (Low-Rank Encryption), a training-free framework that jointly protects FMs and their LoRA adapters (Figure~\ref{fig:overall}). Unlike conventional cryptographic methods that secure data confidentiality at the bit level, LoREnc can be interpreted as operating in the spirit of perceptual encryption~\cite{DBLP:journals/tcsv/LiCCBL07}, where unauthorized access leads to severe semantic degradation of model outputs, and the protection is realized directly in the model's weight space. Inspired by the Eckart--Young theorem~\cite{eckart1936approximation}, LoREnc mathematically suppresses the dominant low-rank components of FM weights to structurally degrade unauthorized inference outputs. Conversely, it compensates for these components in authorized adapters to enable theoretically exact recovery of original performance. Unlike prior approaches, LoREnc operates purely on post-training weights without accessing the original dataset, thereby ensuring \emph{data-independence} suitable for privacy-sensitive on-device deployment. Specifically, we propose a training-free spectral truncation and compensation mechanism that preserves authorized performance while inducing structural collapse for unauthorized users. We further introduce a secure adapter encoding scheme robust against reuse and recovery attacks. Extensive experiments, including on-device benchmarks, confirm that LoREnc achieves strong protection with under 1\% overhead.

\section{Related Work}
\label{sec:related_work}

\subsection{Vulnerabilities in Edge Deployment}
Deploying deep learning models on edge devices exposes model weights to adversaries with physical or software-level access, making unauthorized reuse, extraction, and model stealing practical at scale~\cite{sun2021mind, xu2019first, ren2024demistify, deepsteal, huang2022smart}.
Moreover, PEFT and lightweight adapters such as LoRA~\cite{edward2021} simplify edge deployment, but can also facilitate attacks by providing structured update signals. For example, Spectral DeTuning~\cite{horwitz2024recovering} shows that collecting merged FM and adapter weights can recover pre-trained parameters via iterative low-rank factorization.

\subsection{Model Protection and Encryption}
Model protection approaches can be broadly categorized into passive and active methods. Unlike passive techniques such as watermarking and fingerprinting~\cite{zhang2018protecting, yang2021robust}, active methods restrict the model's functionality.
Representative active methods hide important layers in secure storage (e.g., SOTER~\cite{DBLP:conf/usenix/ShenQJWWCZWCLZC22}, ShadowNet~\cite{DBLP:conf/sp/SunSLCLJ23}), obfuscate weights (e.g., NNSplitter~\cite{zhou2023nnsplitteractivedefensesolution}, GroupCover~\cite{DBLP:conf/icml/Zhang0ZZZ0W24}), or decompose parameters (e.g., SLIP~\cite{DBLP:journals/corr/abs-2407-10886}) to prevent unauthorized inference or weight extraction. While these provide stronger protection by modifying deployed parameters, they typically rely on retraining or iterative optimization (e.g., NNSplitter), or expose transformed weights via interactive secure-resource protocols at inference time (e.g., SLIP). In contrast, LoREnc is fully on-device, training-free, and data-independent.

\begin{table*}[t]
  \centering
  \caption{Summary of design requirements for practical foundation-model (FM) protection.}
  \label{tab:design_requirements}
  {\renewcommand{\arraystretch}{1.05}
    \begin{tabular}{c|l}
      \hline
      \textbf{Requirement} & \textbf{Description}  \\
      \hline
      {Effectiveness} & Unauthorized foundation inference should yield {semantically meaningless outputs}. \\
      \cdashline{1-2}
      {Integrity} & Authorized downstream inference should {exactly match} baseline performance. \\
      \cdashline{1-2}
      {Stealthiness} & Protected weights should not appear structurally distinct from ordinary adapters. \\
      \cdashline{1-2}
      {Efficiency} & {Authorized inference} should incur minimal computational and memory overhead. \\
      \cdashline{1-2}
      {Resilience} & The FM should remain unrecoverable under model separation and restoration attacks. \\
      \cdashline{1-2}
      {Data-independence} & No training data should be required for encryption or downstream authorization. \\
      \hline
  \end{tabular}}
\end{table*}

\section{Problem Definition and Threat Model}
\label{sec:problem_definition}

Our objective is to protect the deployed FM weights against unauthorized reuse while preserving the functionality of authorized downstream tasks using LoRA adapters. To this end, we consider a training-free protection setting in which subsets of model parameters are secured and distributed with LoRA adapters, thereby allowing only authorized users to recover the intended behavior.

\subsection{Threat Model and Assumptions}
Unlike server-side deployments, on-device models reside in user-controlled environments where physical memory inspection and static weight analysis are readily available. We assume restoration keys are protected in a hardware-backed environment such as a Trusted Execution Environment (TEE), while the deployed artifacts (encrypted FM weights and encrypted adapters) are accessible to an unauthorized party. The adversary then attempts restoration via ML-level weight-extraction methods such as Spectral DeTuning (SDT)~\cite{horwitz2024recovering} or limited fine-tuning. LoREnc targets practical empirical resistance against such ML-level extraction, rather than formal cryptographic unrecoverability; physical side-channel attacks and direct key leakage are outside the scope of this work.

\subsection{Design Requirements}

We define six design requirements for practical FM protection, summarized in Table~\ref{tab:design_requirements}, which serve as evaluation criteria throughout this paper.
The first five requirements are adopted from prior work~\cite{zhou2023nnsplitteractivedefensesolution}, and we introduce \emph{data-independence} to reflect a realistic situation in which collecting training data and retraining models become impractical.

\vspace{-0.3cm}
\begin{table*}[t!]
  \setlength\tabcolsep{0pt}
  \caption{Visualization of text-to-image results with SD 1.5. The first row shows the baseline results; the remaining rows depict outputs from LoREnc. (Prompt: ``A trio of dogs sitting in their owner's lap in a red convertible.'') \label{tab:vision-visualizations}}
  \begin{center}
    {\renewcommand{\arraystretch}{1.3}
      \begin{tabular}{c:c|c:ccccc}
        \hline

        \multirow{2}{*}{  Model  } & \multirow{2}{*}{  Authorization  } & \multirow{2}{*}{Foundation} & \multicolumn{5}{c}{Downstream} \\
        \cdashline{4-8}
        & & & Task 1 & Task 2 & Task 3 & Task 4 & Task 5 \\

        \hline

        \multicolumn{2}{c|}{\raisebox{3.\height}{Baseline}} &
        \raisebox{-0.1\height}{\includegraphics[width=0.13\linewidth]{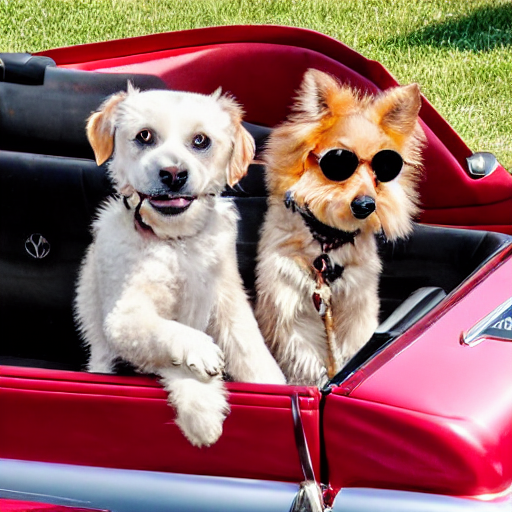}} &
        \raisebox{-0.1\height}{\includegraphics[width=0.13\linewidth]{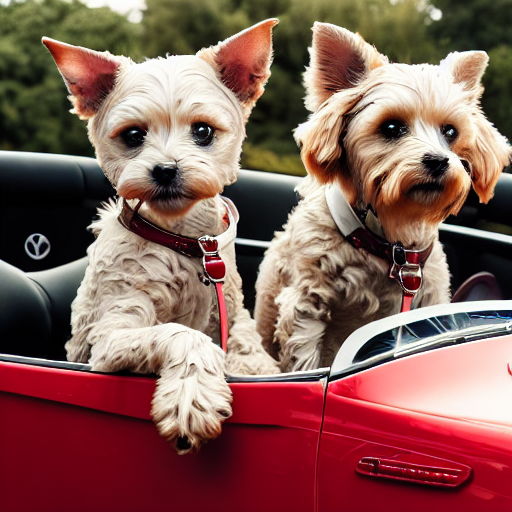}} &
        \raisebox{-0.1\height}{\includegraphics[width=0.13\linewidth]{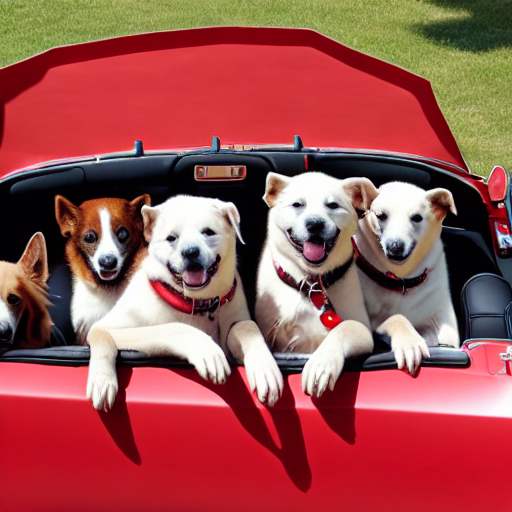}} &
        \raisebox{-0.1\height}{\includegraphics[width=0.13\linewidth]{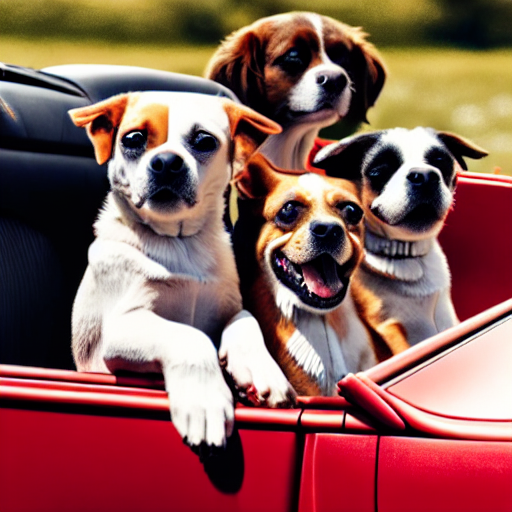}} &
        \raisebox{-0.1\height}{\includegraphics[width=0.13\linewidth]{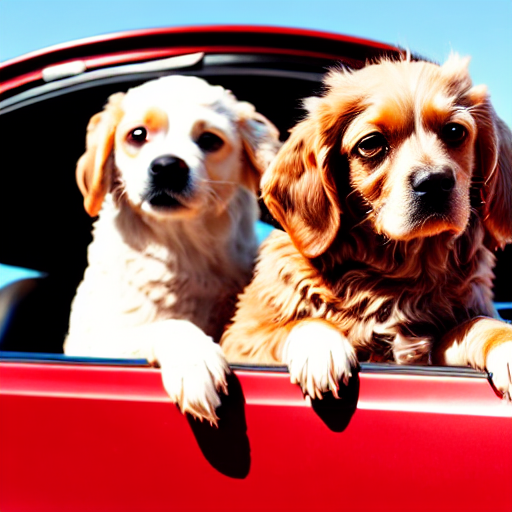}} &
        \raisebox{-0.1\height}{\includegraphics[width=0.13\linewidth]{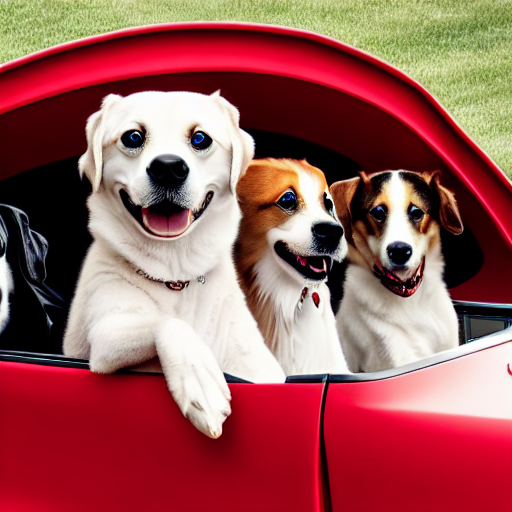}} \\

        \hline

        \multirow{2}{*}{ LoREnc } & \raisebox{3.\height}{\xmark} &
        \raisebox{-0.1\height}{\includegraphics[width=0.13\linewidth]{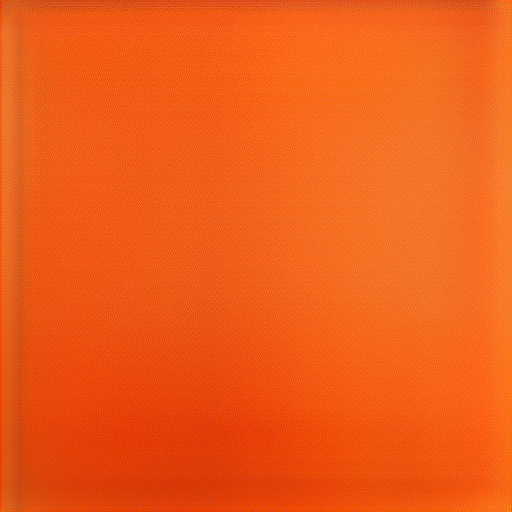}} &
        \raisebox{-0.1\height}{\includegraphics[width=0.13\linewidth]{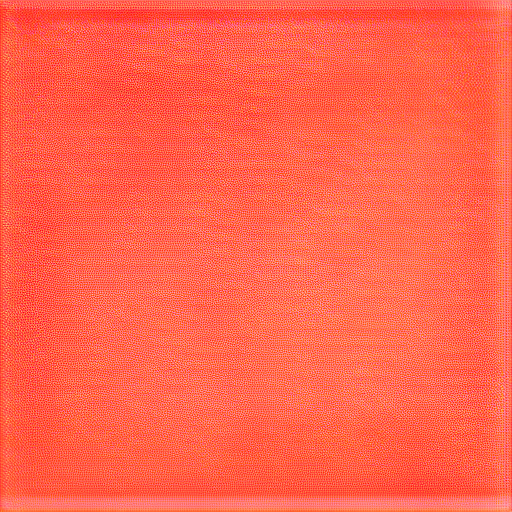}} &
        \raisebox{-0.1\height}{\includegraphics[width=0.13\linewidth]{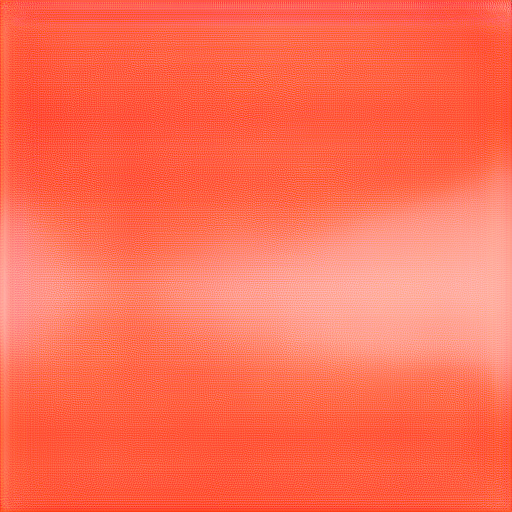}} &
        \raisebox{-0.1\height}{\includegraphics[width=0.13\linewidth]{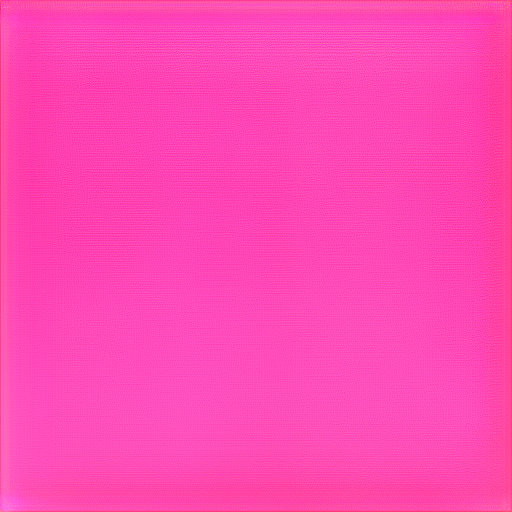}} &
        \raisebox{-0.1\height}{\includegraphics[width=0.13\linewidth]{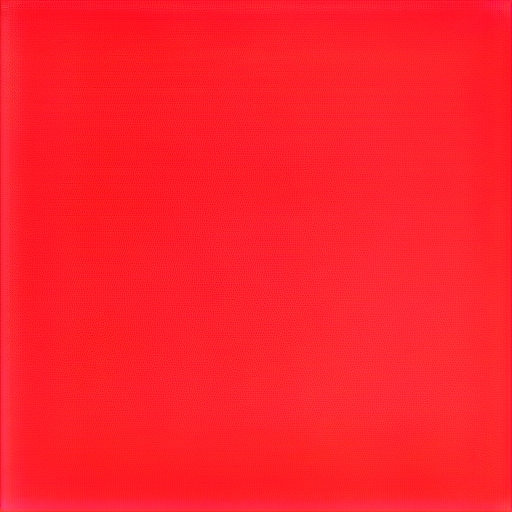}} &
        \raisebox{-0.1\height}{\includegraphics[width=0.13\linewidth]{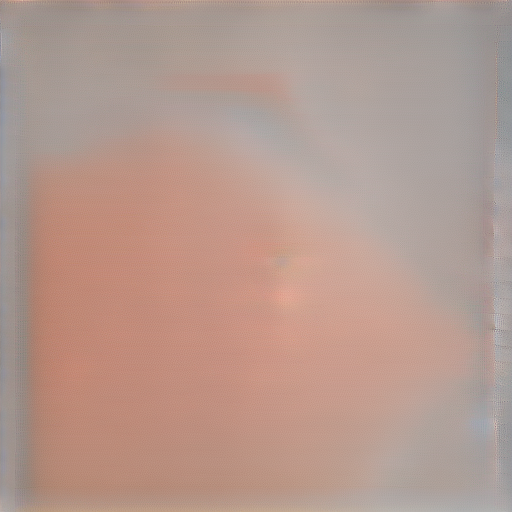}} \\

        \cdashline{2-8}

        & \raisebox{3.\height}{\cmark} &
        \raisebox{-0.1\height}{\includegraphics[width=0.13\linewidth]{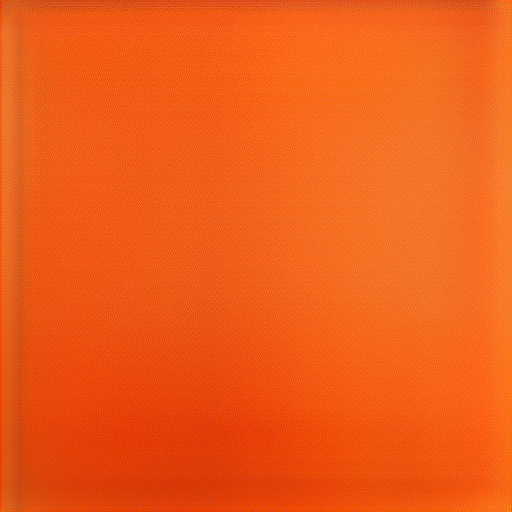}} &
        \raisebox{-0.1\height}{\includegraphics[width=0.13\linewidth]{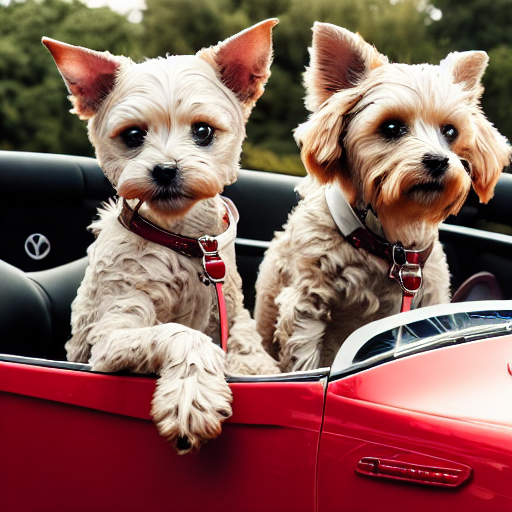}} &
        \raisebox{-0.1\height}{\includegraphics[width=0.13\linewidth]{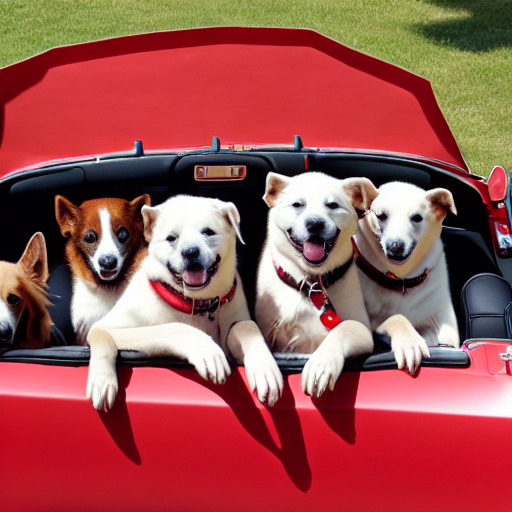}} &
        \raisebox{-0.1\height}{\includegraphics[width=0.13\linewidth]{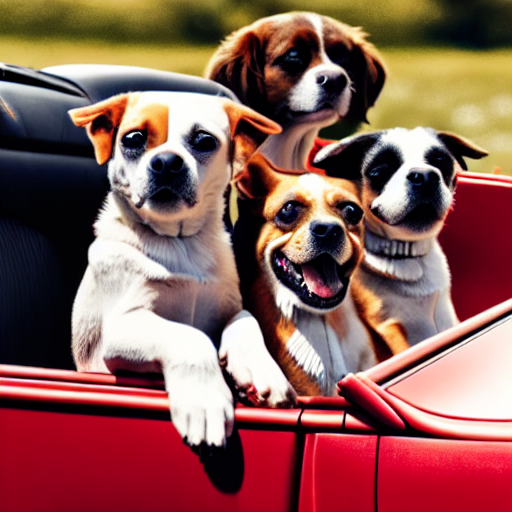}} &
        \raisebox{-0.1\height}{\includegraphics[width=0.13\linewidth]{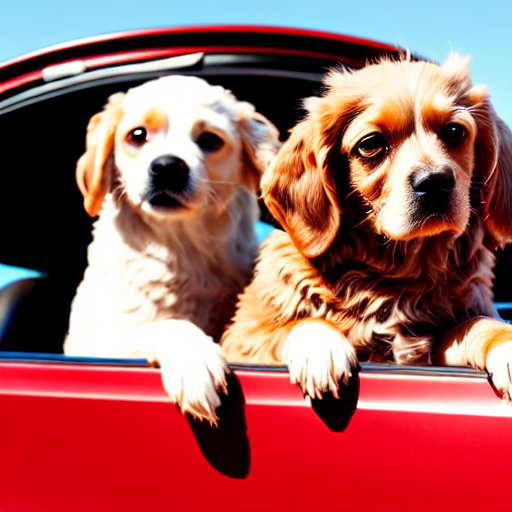}} &
        \raisebox{-0.1\height}{\includegraphics[width=0.13\linewidth]{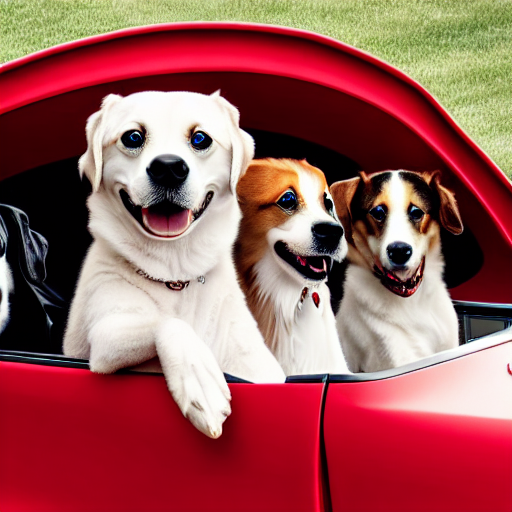}} \\

        \hline

    \end{tabular}}
  \end{center}
\end{table*}

\newcommand{\meanstd}[2]{\makecell{$#1$ \\[-2.5pt]
    \begin{scriptsize}($\pm#2$)
\end{scriptsize}}}

\begin{table*}[h]
  \setlength{\tabcolsep}{12pt}
  \caption{Performance test of LoREnc on SD 1.5 with COCO Captions~\cite{chen2015microsoft}. LPIPS is computed between images generated by the baseline model and the LoREnc-protected model, and $\Delta$CLIP is computed as $\mathrm{CLIP}_{\mathrm{LoREnc}}-\mathrm{CLIP}_{\mathrm{Baseline}}$ for each task. Lower $\Delta$CLIP values indicate more severe performance degradation. For readability, 0.000 values are visually emphasized. \label{tab:vision-metrics}}
  \begin{center}
    \begin{tabular}{c|c|c:ccccc}
      \hline

      & \multirow{2}{*}{\makecell{Authorization}} & \multirow{2}{*}{Foundation} & \multicolumn{5}{c}{Downstream} \\
      \cdashline{4-8}
      &  &  & Task 1 & Task 2 & Task 3 & Task 4 & Task 5 \\
      \hline

      \multirow{2}{*}{$\Delta$CLIP score} & \xmark &
      \meanstd{-0.148}{0.046} &
      \meanstd{-0.155}{0.052} &
      \meanstd{-0.148}{0.048} &
      \meanstd{-0.143}{0.044} &
      \meanstd{-0.144}{0.049} &
      \meanstd{-0.149}{0.048} \\

      \cdashline{2-8}

      & \cmark &
      \meanstd{-0.148}{0.046} &
      \meanstd{\textbf{\textit{0.000}}}{0.000} &
      \meanstd{\textbf{\textit{0.000}}}{0.000} &
      \meanstd{\textbf{\textit{0.000}}}{0.000} &
      \meanstd{\textbf{\textit{0.000}}}{0.001} &
      \meanstd{\textbf{\textit{0.000}}}{0.000} \\

      \hline

      \multirow{2}{*}{LPIPS} & \xmark &
      \meanstd{0.827}{0.080} &
      \meanstd{0.870}{0.074} &
      \meanstd{0.857}{0.073} &
      \meanstd{0.869}{0.084} &
      \meanstd{0.860}{0.084} &
      \meanstd{0.844}{0.085} \\

      \cdashline{2-8}

      & \cmark &
      \meanstd{0.827}{0.080} &
      \meanstd{\textbf{\textit{0.000}}}{0.000} &
      \meanstd{\textbf{\textit{0.000}}}{0.000} &
      \meanstd{\textbf{\textit{0.000}}}{0.000} &
      \meanstd{\textbf{\textit{0.000}}}{0.001} &
      \meanstd{\textbf{\textit{0.000}}}{0.000} \\

      \hline
    \end{tabular}
  \end{center}
\end{table*}

\begin{table*}[h]

  \setlength{\tabcolsep}{6pt}

  \caption{Efficacy on autoregressive FMs evaluated on WikiText-2~\cite{merity2017pointer} under authorized access. (Left) Increase in perplexity ($\Delta$PPL) after applying LoREnc. (Right) Example outputs from the protected FMs (input: ``Kirby 's Block Ball is''). \label{tab:text-results}}
  {\renewcommand{\arraystretch}{1.2}
    \begin{tabular}{c|c|c:c}

      & Model & Foundation & Downstream \\

      \hline

      \multirow{2}{*}{$\Delta$PPL} & GPT-2 &
      $120.0$ &
      $0.000$ \\

      \cline{2-4}

      & Llama 3 &
      $8793$ &
      $0.000$ \\

      \hline
    \end{tabular}
    \qquad
    \begin{tabular}{c|l}
      \multicolumn{2}{c}{Example output of the autoregressive FM (GPT-2)} \\
      \hline
      Baseline &
      {\small {Kirby 's Block Ball is} a special item that can be used ...}\\
      \hdashline
      LoREnc &
      {\small {Kirby 's Block Ball is} a " a " a " a " a " a " a " a " a " ...} \\
      \hline
  \end{tabular}}
\end{table*}

\section{LoREnc: Low-Rank Encryption}

\subsection{Spectral Truncation}
Let $W \in \mathbb{R}^{m \times n}$ denote the weight matrix of an FM layer. Our objective is to construct a truncated weight $\tilde{W}$ that conceals the principal knowledge of $W$ while enabling theoretically exact downstream recovery. We decompose the weight as $W = \tilde{W} + L$, where $L$ is the low-rank component (serving as the spectral key) extracted via truncated SVD.

\boldparagraph{Low-rank Component Extraction}
To maximally suppress the semantic information of $W$, we utilize the Eckart--Young theorem~\cite{eckart1936approximation}, which states that the leading singular components capture the dominant energy of a matrix. Consequently, removing $L$ effectively eliminates the model's ability to form coherent structures, leaving only high-frequency residuals that lack semantic meaning. In the supplementary material, we further prove that this truncation maximizes the Frobenius-norm distance between the original and truncated weights.
We compute the low-rank component via TSVD as:
\vspace{-0.07cm}
\begin{equation}
  L = U_{FM} \Sigma_{FM} V_{FM}^T = \mathrm{TSVD}_{\Delta r}(W),
  \vspace{-0.07cm}
\end{equation}
where $\mathrm{TSVD}_{\Delta r}(\cdot)$ denotes the rank-$\Delta r$ truncated SVD operator. Here, $U_{FM} \in \mathbb{R}^{m \times \Delta r}$, $\Sigma_{FM} \in \mathbb{R}^{\Delta r \times \Delta r}$, and $V_{FM} \in \mathbb{R}^{n \times \Delta r}$. The hyperparameter $\Delta r$ specifies the number of truncated singular components and thus controls the strength of the perceptual encryption. Increasing $\Delta r$ generally improves security, but comes with a trade-off of higher overhead. Since $L$ is never deployed to the edge device, reconstruction of $W$ from $\tilde{W}$ alone is infeasible.

\boldparagraph{Spectral Compensation via LoRA}
To preserve downstream functionality, we require the compensated adapters to satisfy $\tilde{W} + \tilde{B}_k \tilde{A}_k = W + B_k A_k$, which yields the condition $\tilde{B}_k \tilde{A}_k = L + B_k A_k$. To guarantee exact compensation of $L$, we employ a temporary rank expansion via concatenation:
\vspace{-0.07cm}
\begin{equation}
  \tilde{B}_k = [\, B_k,\; U_{FM} \Sigma_{FM}^{1/2} \,], \quad
  \tilde{A}_k = [\, A_k,\; \Sigma_{FM}^{1/2} V_{FM}^T \,],
  \vspace{-0.07cm}
\end{equation}
where $\tilde{B}_k \in \mathbb{R}^{m \times (r + \Delta r)}$ and $\tilde{A}_k \in \mathbb{R}^{(r + \Delta r) \times n}$. This construction ensures exact downstream recovery while effectively fusing the low-rank component into the LoRA adapters, satisfying the \emph{integrity} requirement.

\subsection{LoRA Adapter Encryption}
Since the compensated adapters ($\tilde{B}_k, \tilde{A}_k$) explicitly contain the spectral key $L$, unauthorized access could compromise both the adapter and the foundation model. We therefore introduce an explicit LoRA adapter encryption stage to protect LoRA modules against unauthorized access.

\boldparagraph{LoRA Restoration Keys}
We apply SVD to the adapter weights $\tilde{B}_k \tilde{A}_k = U_{Lo} \Sigma_{Lo} V_{Lo}^T$ and split it as
\vspace{-0.07cm}
\begin{equation}
  [K_{\tilde{B}_k}, \tilde{B}^*_k] = U_{Lo} \Sigma_{Lo}^{1/2}, \quad
  [K_{\tilde{A}_k}, \tilde{A}^*_k] = \Sigma_{Lo}^{1/2} V_{Lo}^T.
  \vspace{-0.07cm}
\end{equation}

Here, $\tilde{B}^*_k \in \mathbb{R}^{m \times r}$ and $\tilde{A}^*_k \in \mathbb{R}^{r \times n}$ denote the encrypted LoRA adapter weights, while $K_{\tilde{B}_k}$ and $K_{\tilde{A}_k}$ form the LoRA restoration keys. Beyond encrypting the adapter contents, this step also reduces the LoRA rank from $r+\Delta r$ back to $r$, which helps conceal whether LoREnc has been applied.

\boldparagraph{Orthogonal LoRA Reparameterization}
Finally, we apply a reparameterization
$\tilde{B}'_k = \tilde{B}^*_k M_k, \quad \tilde{A}'_k = M_k^{T} \tilde{A}^*_k$,
with a random orthogonal matrix $M_k \in \mathbb{R}^{r \times r}$.
This induces an isometric rotation in the parameter space, creating infinite equivalent factorizations for the same product. Without this reparameterization, the encrypted adapters would retain the strict orthogonality inherent to SVD, making them distinguishable from standard Gaussian-initialized weights. This structural fingerprint would allow adversaries to easily detect the presence of the protection, thereby compromising the \emph{stealthiness} requirement against simple structural inspection. We note that adaptive detectors specifically designed for protected adapters may still distinguish them, which is outside the scope of this work.

\subsection{Authorized Downstream Inference}
An authorized user retrieves $\tilde{W}$, $\tilde{B}'_k$, and $\tilde{A}'_k$ from storage and obtains the restoration keys $K_{\tilde{B}_k}$ and $K_{\tilde{A}_k}$ from a secure environment. The decrypted weight is obtained as
\vspace{-0.07cm}
\begin{equation}
  \tilde{W} + \tilde{B}'_k \tilde{A}'_k + K_{\tilde{B}_k} K_{\tilde{A}_k}
  = W + B_k A_k.
  \vspace{-0.07cm}
\end{equation}
This reconstruction occurs on-the-fly during the forward pass, requiring no additional memory storage for the restored FM weights.
Notably, even authorized users cannot directly access the original FM weight $W$, as the low-rank component $L$ is never deployed to the device.

\section{Experiments}
We evaluate LoREnc across diverse generative architectures.
To ensure a direct comparison with the state-of-the-art weight-recovery method, Spectral DeTuning~\cite{horwitz2024recovering}, we primarily utilize Stable Diffusion v1.5 (SD 1.5)~\cite{rombach2022high} as our main testbed.
Additionally, we demonstrate the architecture-agnostic nature of LoREnc by providing results on recent DiT-based models (e.g., Sana) in the supplementary material.
Specifically, our experiments address: efficacy of authorized recovery vs.\ unauthorized degradation (Q1), resilience to fine-tuning attacks (Q2), robustness to Spectral DeTuning~\cite{horwitz2024recovering} (Q3), and edge-device efficiency (Q4).
Unless otherwise specified, we set $\Delta r = 4$, as it offers a practical trade-off between effectiveness and computational overhead. Additional details are provided in the supplementary material.

\subsection{Efficacy of Applying LoREnc (Q1)}

We compare three cases: (i) the original model without LoREnc, (ii) LoREnc-applied model under unauthorized access, and (iii) LoREnc-applied model with valid keys. Table~\ref{tab:vision-metrics} reports CLIP~\cite{radford2021learning} and LPIPS~\cite{zhang2018perceptual} scores on SD 1.5~\cite{rombach2022high}. With LoREnc, foundation-only inference is severely degraded, demonstrating strong \emph{effectiveness} against unauthorized access. Conversely, authorized users recover baseline outputs up to negligible floating-point errors, confirming the \emph{integrity} of the downstream tasks. Table~\ref{tab:vision-visualizations} shows structurally collapsed unauthorized outputs and indistinguishable authorized outputs.

Similar trends hold for autoregressive models (Table~\ref{tab:text-results}). These results suggest that this spectral degradation is modality-agnostic. LoREnc successfully induces high perplexity on these models (GPT-2~\cite{radford2019language}, Llama 3~\cite{llama3}), confirming that our method is applicable beyond computer vision.

\begin{table*}[h]

  \setlength\tabcolsep{5pt}

  \caption{Fine-tuning attack resilience on SD 1.5. CLIP scores are measured after one epoch of fine-tuning with varying data sizes. Baseline CLIP score is $0.267$. \label{tab:fine-tuning}}

  \begin{center}

    {\renewcommand{\arraystretch}{1.2}
      \begin{tabular}{c|c:c|c:cccc}

        \hline

        & Method & Training-free & Protected & 0.1k Data & 1k Data & 10k Data & 100k Data \\

        \hline

        \multirow{2}{*}{\makecell{CLIP score\\(Foundation)}} & NNSplitter & \xmark &

        $\makecell{0.187\ \scriptstyle(\pm 0.048)}$ &

        $\makecell{0.240\ \scriptstyle(\pm 0.039)}$ &

        $\makecell{0.267\ \scriptstyle(\pm 0.029)}$ &

        $\makecell{0.267\ \scriptstyle(\pm 0.031)}$ &

        $\makecell{0.251\ \scriptstyle(\pm 0.032)}$ \\

        \cline{2-8}

        & LoREnc & \cmark &

        $\makecell{0.118\ \scriptstyle(\pm 0.034)}$ &

        $\makecell{0.137\ \scriptstyle(\pm 0.035)}$ &

        $\makecell{0.159\ \scriptstyle(\pm 0.032)}$ &

        $\makecell{0.211\ \scriptstyle(\pm 0.040)}$ &

        $\makecell{0.231\ \scriptstyle(\pm 0.034)}$ \\

        \hline

    \end{tabular}}

  \end{center}

\end{table*}

\subsection{Fine-Tuning Attack (Q2)}

We evaluate resilience against adaptive recovery by fine-tuning the truncated SD 1.5 model with 0.1k--100k samples. Table~\ref{tab:fine-tuning} shows that LoREnc consistently prevents meaningful FM recovery, as reflected by low CLIP scores, and outperforms NNSplitter~\cite{zhou2023nnsplitteractivedefensesolution} across all data regimes. Even with 100k samples, performance remains far below the original. Comprehensive ablations on $\Delta r \in \{1, 4, 16, 64\}$ covering computational overhead and defense resilience under both fine-tuning and SDT attacks are consolidated in the supplementary material (Tables~\ref{tab:overhead_detailed} and \ref{tab:fine-tuning-delta-r}).

\begin{figure}[t]
  \centering
  \includegraphics[width=1.0\linewidth]{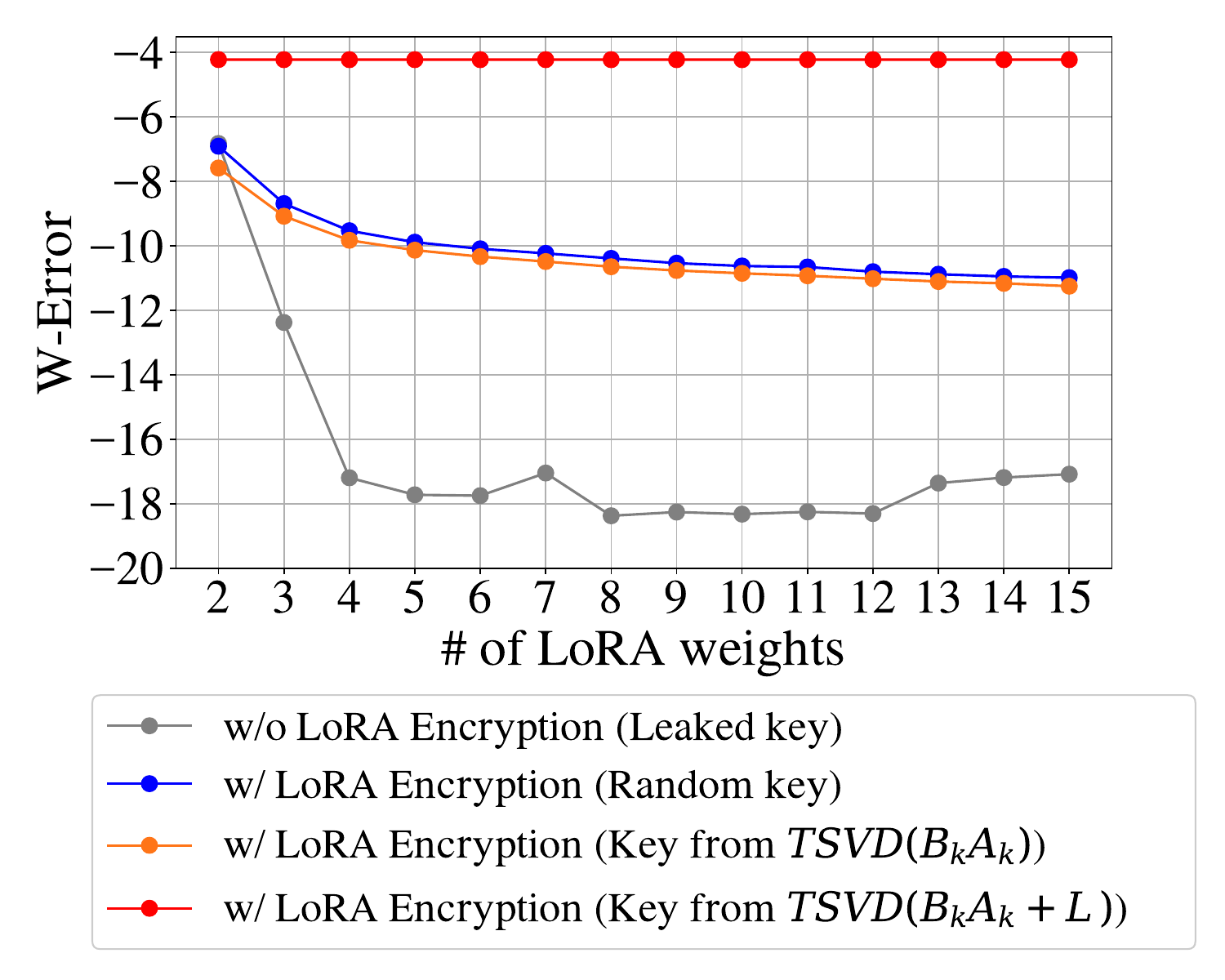}
  \caption{W-Error under Spectral DeTuning with varying numbers of downstream tasks. A higher W-Error indicates better protection.}
  \label{fig:sdt}
\end{figure}

\subsection{Spectral DeTuning Attack (Q3)}

We further evaluate robustness to Spectral DeTuning (SDT)~\cite{horwitz2024recovering}, an advanced attack that attempts to restore FM weights via iterative low-rank factorization. Following the original paper, we measure security using W-Error, the average log-scale mean squared error between estimated and original weights (details in supplementary). Higher W-Error indicates greater deviation from the original weights, implying better protection.
We compare LoREnc against two baseline strategies: (1) random key (blue curve), where the key is generated independently of the weight distribution, and (2) self-derived key (orange curve), where the key is extracted from the adapter itself ($B_k A_k$) without FM context.

As shown in Figure~\ref{fig:sdt}, the random key and the self-derived key fail because they do not capture the structural dependencies and critical spectral information ($L$) of the foundation model. In contrast, LoREnc (red curve) maintains consistently high W-Error, remaining robust under large-scale adapter collection, which highlights the \emph{resilience} of LoREnc against advanced model recovery attempts.

\subsection{\emph{Efficiency} Analysis (Q4)}
We analyze the computational overhead of LoREnc from both theoretical and empirical perspectives.
Theoretically, the computationally expensive operations, such as SVD, are performed entirely before the deployment.
During inference, the only additional cost comes from the increased rank of LoRA adapters ($r \rightarrow r + \Delta r$).
Since $\Delta r$ is typically small (e.g., 4) compared to the hidden dimensions of FMs, this results in a minimal increase in FLOPs and parameter count.
Empirically, as shown in Table~\ref{tab:overhead_result}, LoREnc incurs negligible overheads ($<1\%$) in terms of parameters, GFLOPs, and inference latency measured on a commercial smartphone.
Notably, the latency was evaluated under real-world memory constraints (running the UNet mid-block), confirming that LoREnc is highly suitable for resource-constrained edge environments without requiring specialized hardware accelerators.

\begin{table}[t]
  \caption{Measuring the overhead of LoREnc with UNet of SD 1.5. Inference time is measured by running only the mid-block of the UNet due to real-device memory limitations.} \label{tab:overhead_result}
  \begin{center}
    \begin{small}
      {\renewcommand{\arraystretch}{1.2}
        \begin{tabular}{c|ccc}
          \hline
          & Baseline & LoREnc & Overhead \\
          \hline
          \# of Params (M) & $873.09$ & $874.78$ & $+0.19\%$\\
          \hline
          GFLOPs & $700.18$ & $702.99$ & $+0.40\%$\\
          \hline
          Inference time (sec.) & $0.462$ & $0.463$ & $+0.22\%$\\
          \hline
      \end{tabular}}
    \end{small}
  \end{center}
\end{table}

\section{Conclusion}

We presented LoREnc, a training-free framework employing spectral truncation and compensation to secure on-device FMs. It mathematically guarantees structural collapse for unauthorized inference while preserving integrity for authorized users. In summary, LoREnc satisfies all six design requirements—Effectiveness, Integrity, and Resilience—verified through extensive experiments, while maintaining Stealthiness, Data-independence, and Efficiency essential for practical edge deployment.

\vfill\pagebreak

\clearpage

\setcounter{section}{0}
\setcounter{equation}{0}
\setcounter{table}{0}
\setcounter{figure}{0}

\setlength{\textfloatsep}{20pt plus 2pt minus 4pt}
\setlength{\floatsep}{12pt plus 2pt minus 2pt}
\setlength{\intextsep}{12pt plus 2pt minus 2pt}
\setlength{\dbltextfloatsep}{20pt plus 2pt minus 4pt}
\setlength{\dblfloatsep}{12pt plus 2pt minus 2pt}

\captionsetup{style=base, aboveskip=0pt, belowskip=0pt}
\setlength{\abovecaptionskip}{0pt}
\setlength{\belowcaptionskip}{0pt}
\makeatletter
\long\def\@makecaption#1#2{%
 \vskip 10pt
 \setbox\@tempboxa\hbox{#1. #2}%
 \ifdim \wd\@tempboxa >\hsize #1. #2\par \else \hbox
to\hsize{\hfil\box\@tempboxa\hfil}%
 \fi}
\makeatother

\twocolumn[%
\begin{@twocolumnfalse}
\null
\vskip 2em
\begin{center}
{\large \bf \uppercase{LoREnc: Low-Rank Encryption for \\ Securing Foundation Models and LoRA Adapters\\(Supplemental material)} \par}
\vskip 1.5em
{\large \lineskip .5em
\begin{tabular}[t]{c}{\em }\\ \\  \end{tabular}\par}
\end{center}
\par
\vskip 1.5em
\end{@twocolumnfalse}
]

\renewcommand{\arraystretch}{1.4}

\section{Justification of TSVD-based Truncation}
In the main paper, we claimed that truncating the top-$\Delta r$ singular components maximizes the deviation between the original weights and their truncated counterparts, thereby strengthening our perceptual encryption. We support this claim by deriving the Frobenius norm between the weights.

Let $X \in \mathbb{R}^{m \times n}$ be a real rectangular matrix with $m \geq n$, and let its singular values satisfy $\sigma_1 \geq \cdots \geq \sigma_n \ge 0$.
For an index set $S \subseteq I := \{1,\ldots,n\}$ with $|S|=\Delta r$ (where $1 \leq \Delta r \leq n$), write the SVD expansion as
\begin{equation}
    X = \sum_{i=1}^n \sigma_i \mathbf{u}_i \mathbf{v}^\top_i.
\end{equation}
We define the rank-$\Delta r$ partial sum
\begin{equation} \label{eq:X_S}
    X_S = \sum_{i \in S} \sigma_i \mathbf{u}_i \mathbf{v}^\top_i.
\end{equation}
In particular, define $T := \{1,\ldots,\Delta r\}$, i.e., the indices of the $\Delta r$ largest singular values, and denote
\begin{equation}
    X_T = \sum_{i \in T} \sigma_i \mathbf{u}_i \mathbf{v}^\top_i = \sum_{i=1}^{\Delta r} \sigma_i \mathbf{u}_i \mathbf{v}^\top_i.
\end{equation}

The following lemma formalizes why truncating the most significant singular components maximizes the deviation between the original and truncated weights.
\begin{lem} \label{lemma1}
Among all subsets $S \subseteq I$ with $|S|=\Delta r$, the set $T=\{1,\ldots,\Delta r\}$ maximizes $\norm{X_S}_F$ (equivalently, $\norm{X-(X-X_S)}_F$). That is,
\[
T \in \argmax_{\substack{S \subseteq I \\ |S|=\Delta r}} \norm{X_S}_F.
\]
\end{lem}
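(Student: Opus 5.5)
The plan is to compute $\norm{X_S}_F^2$ in closed form and then reduce the maximization to a trivial statement about sums of nonnegative reals. First I would note that the equivalence in parentheses is immediate, since $X-(X-X_S)=X_S$. So it suffices to work with $\norm{X_S}_F$. Because the Frobenius norm is nonnegative, it is enough to maximize $\norm{X_S}_F^2$.

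Next I will expand $\norm{X_S}_F^2=\mathrm{tr}(X_S^\top X_S)$ using the definition of $X_S$ in \eqref{eq:X_S}:
\[
\norm{X_S}_F^2=\sum_{i\in S}\sum_{j\in S}\sigma_i\sigma_j\,\mathrm{tr}\bigl(\mathbf{v}_i\mathbf{u}_i^\top\mathbf{u}_j\mathbf{v}_j^\top\bigr)
=\sum_{i\in S}\sum_{j\in S}\sigma_i\sigma_j(\mathbf{u}_i^\top\mathbf{u}_j)(\mathbf{v}_j^\top\mathbf{v}_i).
\]
Orthonormality of the left and right singular vectors gives $\mathbf{u}_i^\top\mathbf{u}_j=\delta_{ij}$ and $\mathbf{v}_j^\top\mathbf{v}_i=\delta_{ij}$. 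The cross terms therefore vanish, leaving $\norm{X_S}_F^2=\sum_{i\in S}\sigma_i^2$. This holds for any fixed choice of SVD, even when singular values repeat: only orthonormality is used, and the sets $S$ index that one fixed expansion.

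It then remains to show that $\sum_{i\in T}\sigma_i^2\ge\sum_{i\in S}\sigma_i^2$ for every $S$ with $|S|=\Delta r$. I would argue this by an exchange (pairing) argument. Write $S=\{s_1<\cdots<s_{\Delta r}\}$. Since the $s_k$ are distinct positive integers in increasing order, $s_k\ge k$. Monotonicity $\sigma_1\ge\cdots\ge\sigma_n\ge0$ then gives $\sigma_{s_k}\le\sigma_k$, and hence $\sigma_{s_k}^2\le\sigma_k^2$ because the values are nonnegative. Summing over $k$ gives the claim, so $T$ attains the maximum.

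No step is a real obstacle. The only points requiring care are the nonnegativity needed to square the inequality and the remark that ties among singular values can make the argmax non-unique. That non-uniqueness is exactly why the statement uses $T\in\argmax$ rather than equality.
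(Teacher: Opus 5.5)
Your proposal is correct and follows essentially the same route as the paper: you obtain $\norm{X_S}_F^2=\sum_{i\in S}\sigma_i^2$ from the orthonormality of the singular vectors, then use $s_k\ge k$ together with the monotonicity and nonnegativity of the singular values to get $\sigma_{s_k}^2\le\sigma_k^2$ and sum. Your explicit trace expansion and the observation that $X-(X-X_S)=X_S$ only spell out details the paper leaves implicit.
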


\begin{proof}
From Equation~\ref{eq:X_S} and the orthonormality of the singular vectors,
\begin{equation} \label{eq:norm_X_S}
    \norm{X_S}_F^2 = \sum_{i \in S} \sigma_i^2.
\end{equation}
Let $S=\{i_1<\cdots<i_{\Delta r}\}$ be any subset of size $\Delta r$. Since $i_j \ge j$ and the sequence $(\sigma_i)$ is non-increasing, we have $\sigma_{i_j} \le \sigma_j$ for all $j=1,\ldots,\Delta r$. Therefore,
\[
\sum_{i \in S} \sigma_i^2 = \sum_{j=1}^{\Delta r} \sigma_{i_j}^2 \le \sum_{j=1}^{\Delta r} \sigma_j^2 = \sum_{i \in T} \sigma_i^2 = \norm{X_T}_F^2.
\]
Taking square roots yields $\norm{X_S}_F \le \norm{X_T}_F$ for all such $S$, so $T$ is a maximizer.
\end{proof}

\section{Experiment Details} \label{appendix:experiment_details}
Experiments were conducted using an NVIDIA H100 GPU (80GB HBM3), with FP32 precision (w/o NVIDIA TF32).

\subsection{Efficacy of Applying LoREnc (Q1)}
We obtained Stable Diffusion 1.5~\cite{supp:rombach2022high}, GPT-2~\cite{supp:radford2019language}, and Llama 3~\cite{supp:llama3} from Hugging Face (stable-diffusion-v1-5/stable-diffusion-v1-5, openai-community/gpt2, meta-llama/Meta-Llama-3-8B).

For Stable Diffusion, we used five downstream LoRA adapters (ral-bastet-sd15, ral-chrcrts-sd15, ral-cigarette-sd15, ral-cofzee-sd15, ral-crystals-sd15) from the LoWRA Bench dataset~\cite{supp:horwitz2024recovering}. We used the first 100 captions from the COCO Captions validation set~\cite{supp:chen2015microsoft} for text-to-image evaluation (\url{https://github.com/tylin/coco-caption/blob/master/annotations/captions_val2014.json}). CLIP scores were computed using the ViT-L/14 CLIP model~\cite{supp:radford2021learning}, and LPIPS~\cite{supp:zhang2018perceptual} was computed using VGG networks. For generations, we fixed the random seed to $0$ and used the default inference settings.
Table \ref{tab:raw-clip-score} shows the raw results of CLIP scores. $\Delta$CLIP score in the main paper is computed by subtracting the baseline scores (first row) from scores of the LoREnc-applied model (second and third rows).

For GPT-2 and Llama 3, we evaluated on the WikiText-2 test split~\cite{supp:merity2017pointer} (wikitext-2-raw-v1; \url{https://huggingface.co/datasets/wikitext}) and used publicly available LoRA adapters from Hugging Face (varun-v-rao/gpt2-large-lora-2.95M-squad-model1, hallisky/lora-formality-formal-llama-3-8b).

\subsection{Fine-Tuning Attack (Q2)}
Because NNSplitter requires reinforcement-learning-based retraining, we followed the authors' released implementation and hyperparameters. We trained Stable Diffusion 1.5~\cite{supp:rombach2022high} on the LAION dataset~\cite{supp:laion} following the original procedure. For the RNN controller, we used \texttt{epoch=20}, \texttt{batch\_size=5}, and \texttt{lr=0.01}. For Stable Diffusion updates within each RL batch, we used \texttt{batch\_size=8}, \texttt{lr=0.005}, \texttt{min\_w=-0.14}, \texttt{max\_w=0.24}, and \texttt{eps=8e-5} for 12,800 steps. We applied early stopping when the RL reward or the generated images' CLIP score did not improve over several runs. This procedure produced an encrypted model with $2295$ weight secrets.

For the fine-tuning attack experiment, we fine-tuned using AdamW with \texttt{lr=1e-4} and \texttt{batch\_size=32}.

\subsection{Spectral DeTuning Attack (Q3)}
We used the same dataset and experimental protocol as the original Spectral DeTuning paper~\cite{supp:horwitz2024recovering} (\texttt{n\_iters=300}, \texttt{sched\_start\_rank=1}, \texttt{sched\_end\_rank=32}).
Following the paper, we used W-Error defined as
\begin{equation}
  \text{W-Error} = \frac{1}{N} \sum_{l=1}^{N} \log_{10} \left( \frac{1}{|W^{(l)}|} \| \hat{W}^{(l)} - W^{(l)} \|_F^2 \right)
\label{eq:w_error}
\end{equation}
\noindent where $N$ is the total number of layers, $|W^{(l)}|$ is the number of parameters in the $l$-th layer, and $\hat{W}^{(l)}$ denotes the recovered weight.

\subsection{Efficiency Analysis (Q4)}
We used the \texttt{calflops} library (\url{https://github.com/MrYxJ/calculate-flops.pytorch}) to compute GFLOPs for the overhead analysis. For edge-device measurements, we converted the model to LiteRT (formerly TFLite) and tested the official LiteRT Android benchmark application on a Galaxy Fold 4 smartphone (10 warm-up runs; average over 50 measured runs). As described in the main paper, we measured inference time by running only the UNet mid-block due to memory constraints on the device. Detailed parameter and computational overhead results are reported in Table~\ref{tab:overhead_detailed}.

Interestingly, while GFLOPs increase linearly with $\Delta r$, the on-device inference latency shows a non-linear jump at $\Delta r=16$. This behavior is typical in on-device environments, attributed to memory access overheads exceeding cache thresholds or suboptimal kernel tiling for specific matrix dimensions. This observation reinforces our choice of $\Delta r=4$ as the optimal trade-off point for SD 1.5, offering strong security with negligible latency.

\newcommand{\valpct}[2]{\makecell{#1\\(+#2\%)}}
\begin{table*}[t]
\caption{Detailed results of the computation and parameter overhead of LoREnc with various $\Delta r$.} \label{tab:overhead_detailed}
\begin{center}
\begin{small}
\begin{tabular}{c|ccccc}
\hline
 - & Baseline & 1 & 4 & 16 & 64 \\
\Xhline{2pt}
\# of Params (M) & $873.09$ & \valpct{873.51}{0.05} & \valpct{874.78}{0.19}& \valpct{879.87}{0.78}& \valpct{900.22}{3.11}\\
\hline
GFLOPs & $700.18$ & \valpct{700.88}{0.10}& \valpct{702.99}{0.40}& \valpct{711.39}{1.60}& \valpct{745.03}{6.41}\\
\hline
Inference time (Smartphone / sec.) & $0.462$ & \valpct{0.463}{0.22} & \valpct{0.463}{0.22} & \valpct{0.550}{19.0} & \valpct{0.592}{28.1} \\
\hline
\end{tabular}
\end{small}
\end{center}
\end{table*}

\renewcommand{\meanstd}[2]{\makecell{$#1$ \\ \begin{scriptsize}($\pm#2$)\end{scriptsize}}}
\begin{table*}[h!]
\setlength{\tabcolsep}{12pt}
\caption{Raw CLIP scores from the ``Efficacy of Applying LoREnc (Q1)" experiment.}
\label{tab:raw-clip-score}
\begin{center}
\begin{tabular}{c:c|c:ccccc}
\hline

\multirow{2}{*}{Model} & \multirow{2}{*}{Authorization} & \multirow{2}{*}{Foundation} & \multicolumn{5}{c}{Downstream} \\
\cdashline{4-8}
 & & & Task 1 & Task 2 & Task 3 & Task 4 & Task 5 \\
 
\hline

\multicolumn{2}{c|}{\raisebox{0.2\height}{Baseline}} & 
\meanstd{0.267}{0.032} & 
\meanstd{0.271}{0.035} & 
\meanstd{0.270}{0.033} & 
\meanstd{0.260}{0.035} & 
\meanstd{0.263}{0.036} & 
\meanstd{0.265}{0.030} \\

\hline

\multirow{2}{*}{LoREnc} & \raisebox{0.2\height}{\xmark} & 
\meanstd{0.118}{0.034} & 
\meanstd{0.116}{0.032} & 
\meanstd{0.121}{0.031} & 
\meanstd{0.117}{0.027} & 
\meanstd{0.119}{0.029} & 
\meanstd{0.116}{0.032} \\

\cdashline{2-8}

& \raisebox{0.2\height}{\cmark} & 
\meanstd{0.118}{0.034} & 
\meanstd{0.271}{0.035} & 
\meanstd{0.270}{0.033} & 
\meanstd{0.260}{0.035} & 
\meanstd{0.263}{0.036} & 
\meanstd{0.265}{0.030} \\

\hline

\end{tabular}
\end{center}
\end{table*}
\section{Additional Qualitative Results on DiT Architectures}

While our main experiments focus on SD 1.5 for fair comparison with prior baselines, LoREnc is fundamentally a matrix-level operation applicable to any architecture. To verify its generalizability, we evaluate LoREnc on Sana-0.6B~\cite{supp:SANA}, a state-of-the-art Diffusion Transformer (DiT). We specifically selected the model because its extremely efficient footprint makes it a realistic and practical candidate for our targeted on-device edge deployment scenarios. Following standard PEFT practices, we selectively target the attention projection layers (i.e., \texttt{attn.to\_q}, \texttt{q\_proj}, \texttt{to\_k}, \texttt{k\_proj}, \texttt{to\_v}, \texttt{v\_proj}), resulting in exactly 112 protected layers.

Figure~\ref{fig:sana_ablation} visualizes the effect of varying the truncation rank ($\Delta r$). Unlike most large-scale models, Sana-0.6B packs dense spectral information into its weights. Consequently, a minimal spectral truncation ($\Delta r=4$) leads to partial information leakage, and structural concepts (e.g., dog, airplane) remain recognizable. To achieve complete semantic collapse on such dense architectures, a higher truncation rank is required. At $\Delta r=16$, the global structural information is completely lost, leaving only meaningless low-level statistical features (e.g., color). Notably, even with $\Delta r=16$ across all 112 target layers, the parameter overhead added to the LoRA adapters remains negligible at approximately $0.69\%$. This demonstrates that LoREnc securely scales with the model's architectural density while preserving the strict efficiency required for edge devices.

\begin{figure}[t]
    \centering
    
    \begin{minipage}{0.32\linewidth}
        \centering
        \includegraphics[width=\linewidth]{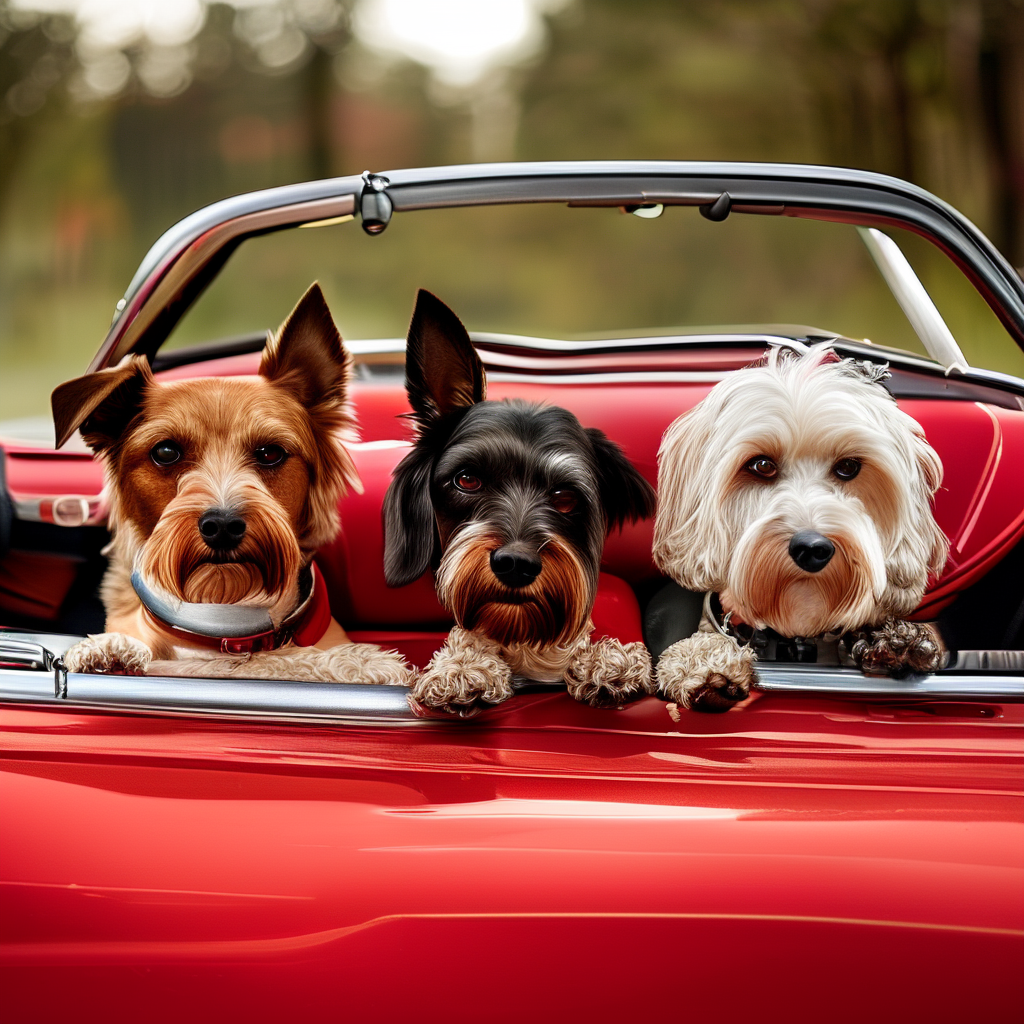}
    \end{minipage}\hfill
    \begin{minipage}{0.32\linewidth}
        \centering
        \includegraphics[width=\linewidth]{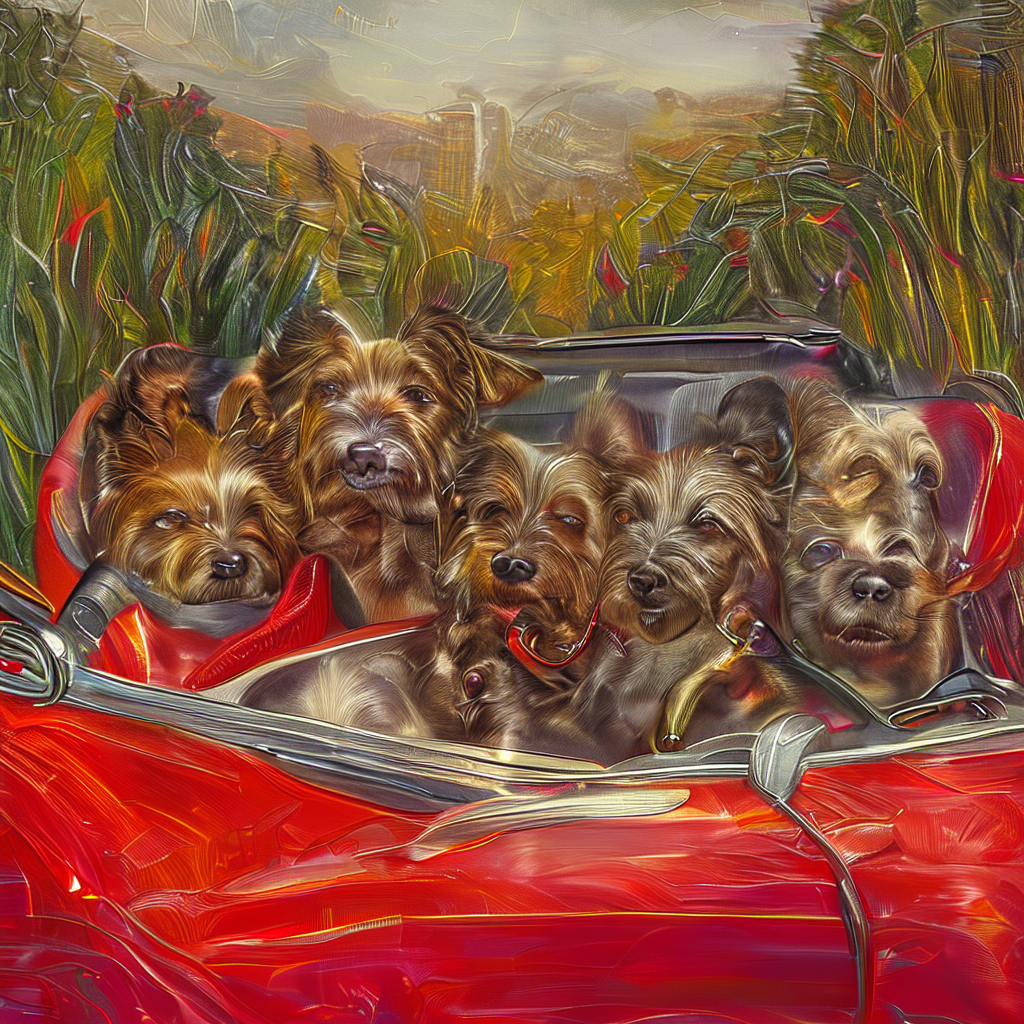}
    \end{minipage}\hfill
    \begin{minipage}{0.32\linewidth}
        \centering
        \includegraphics[width=\linewidth]{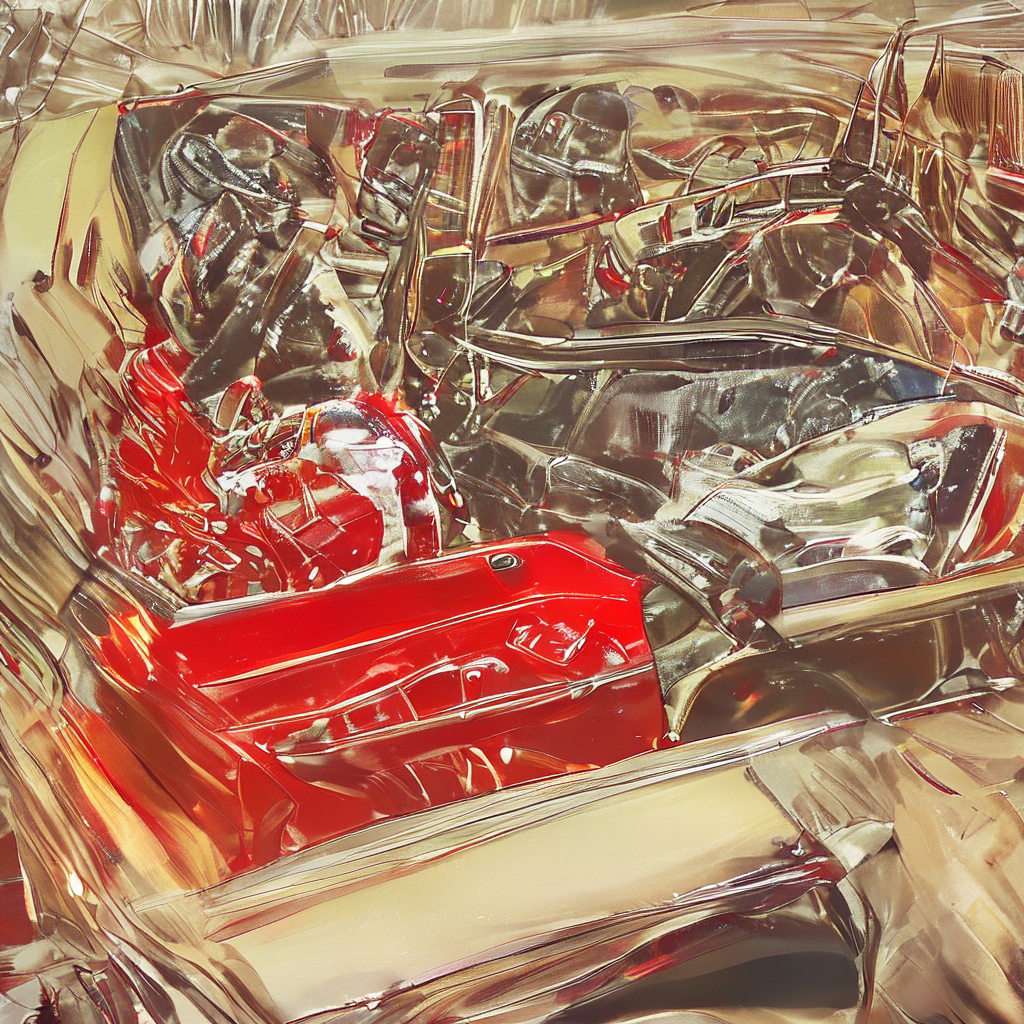}
    \end{minipage}
    
    \vspace{0.15cm}
    
    \begin{minipage}{0.32\linewidth}
        \centering
        \includegraphics[width=\linewidth]{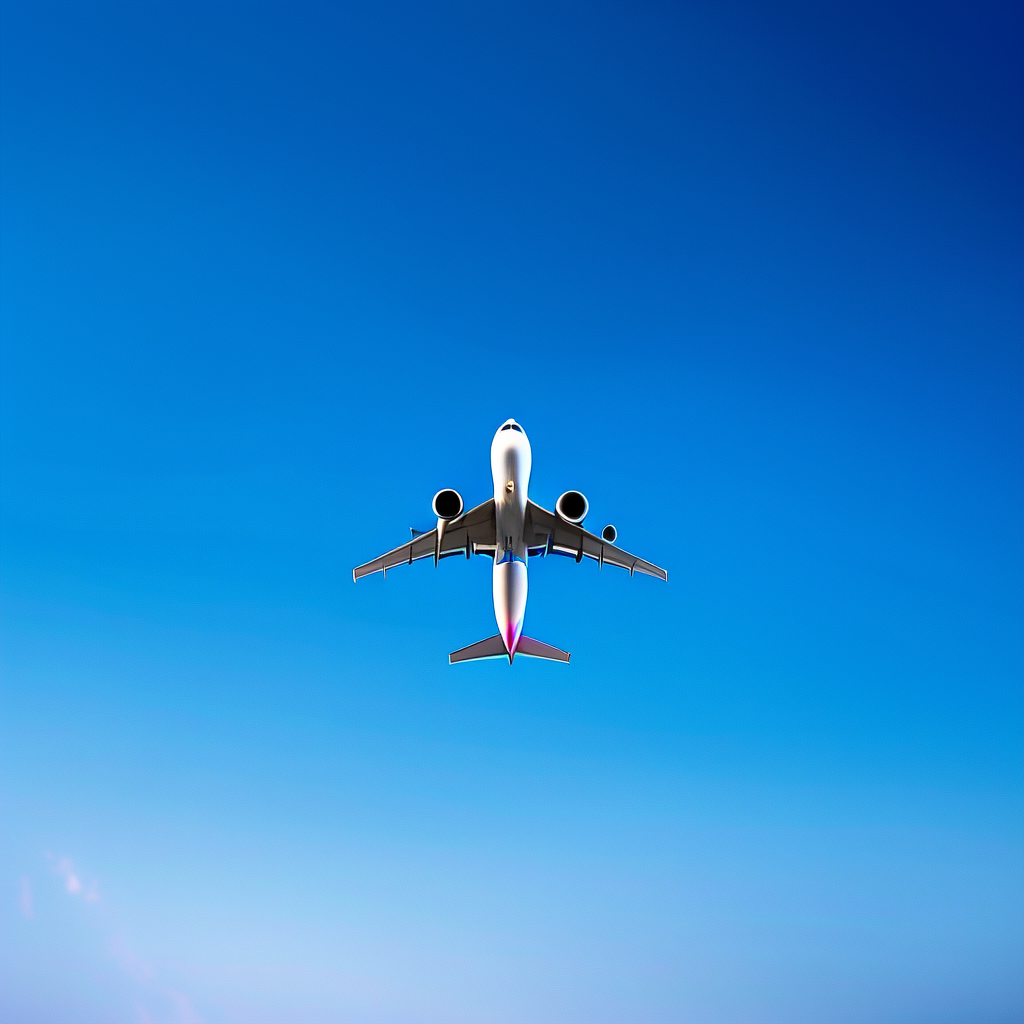}
    \end{minipage}\hfill
    \begin{minipage}{0.32\linewidth}
        \centering
        \includegraphics[width=\linewidth]{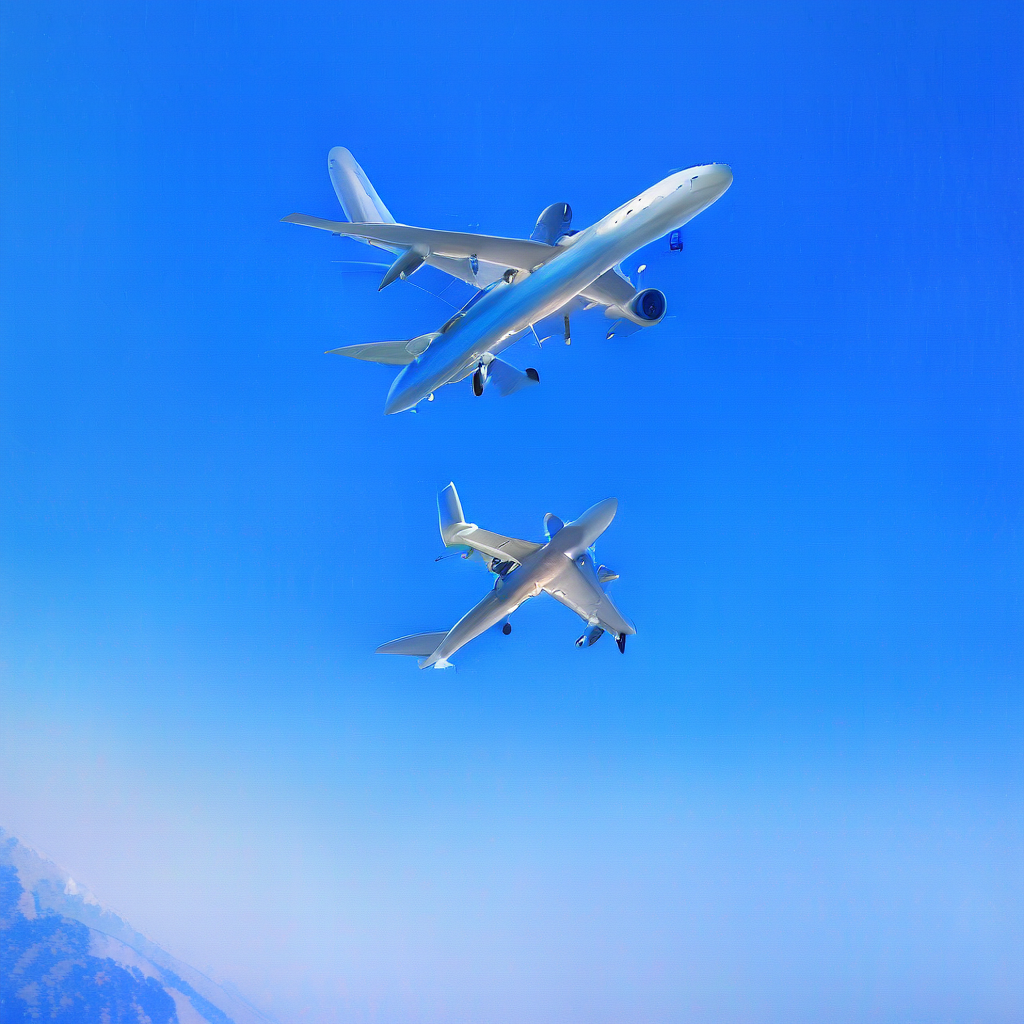}
    \end{minipage}\hfill
    \begin{minipage}{0.32\linewidth}
        \centering
        \includegraphics[width=\linewidth]{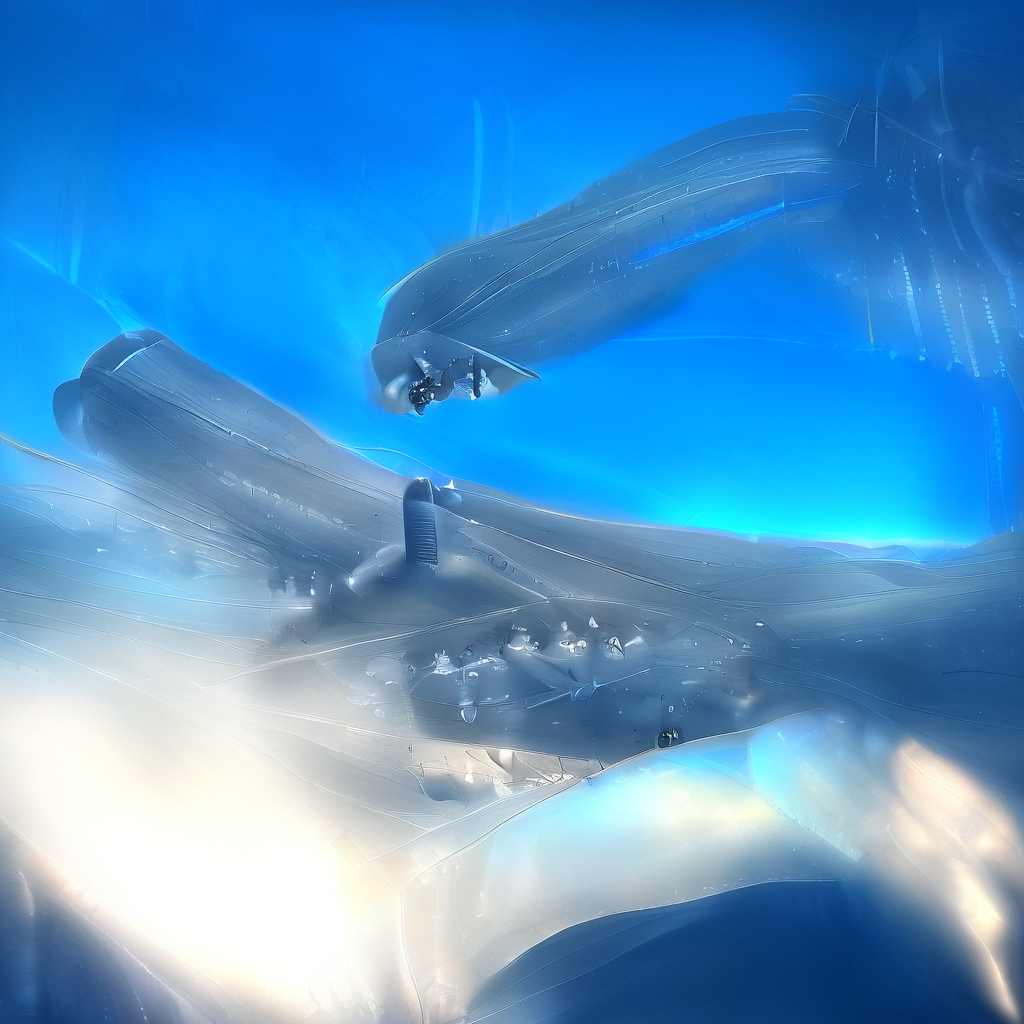}
    \end{minipage}
    
    \vspace{0.15cm}
    
    \begin{minipage}{0.32\linewidth}
        \centering
        \includegraphics[width=\linewidth]{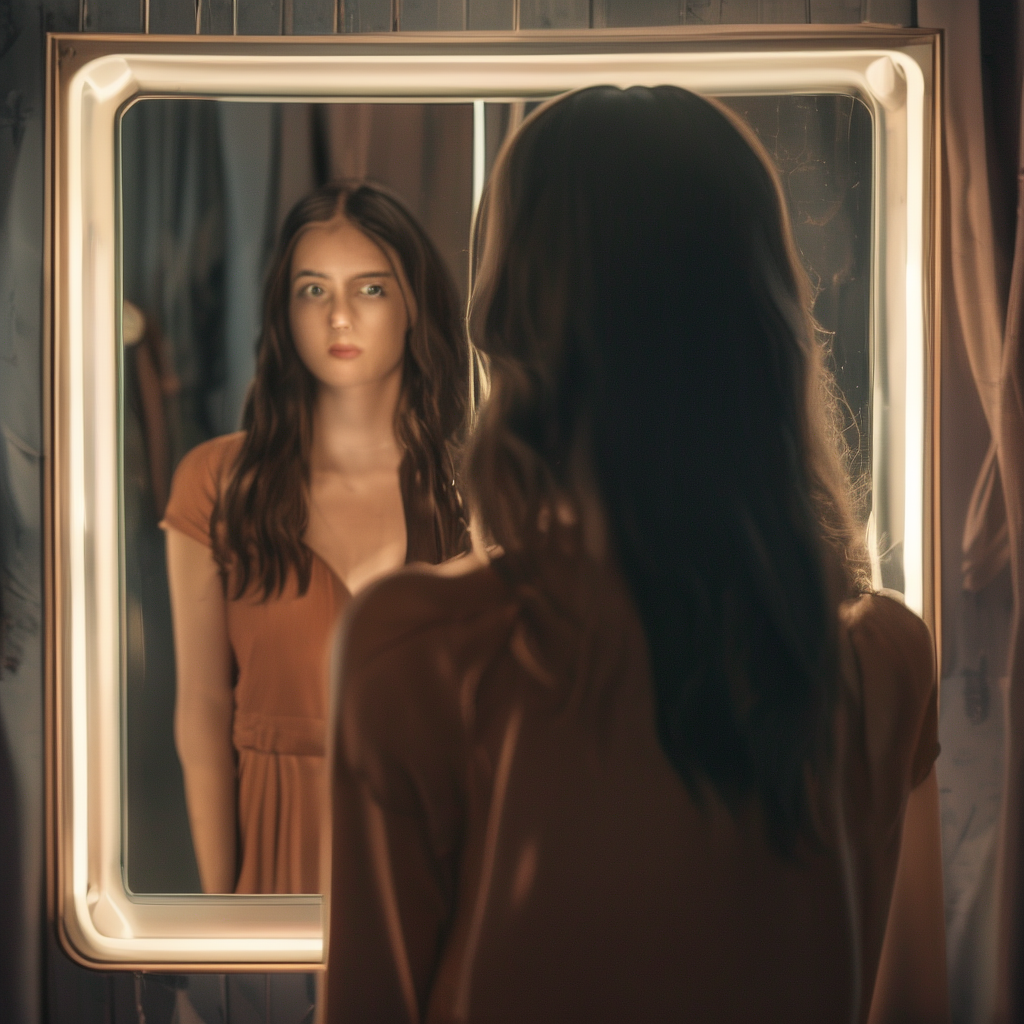}
        \vspace{0.1cm}
        \centerline{\small (a) Original}
    \end{minipage}\hfill
    \begin{minipage}{0.32\linewidth}
        \centering
        \includegraphics[width=\linewidth]{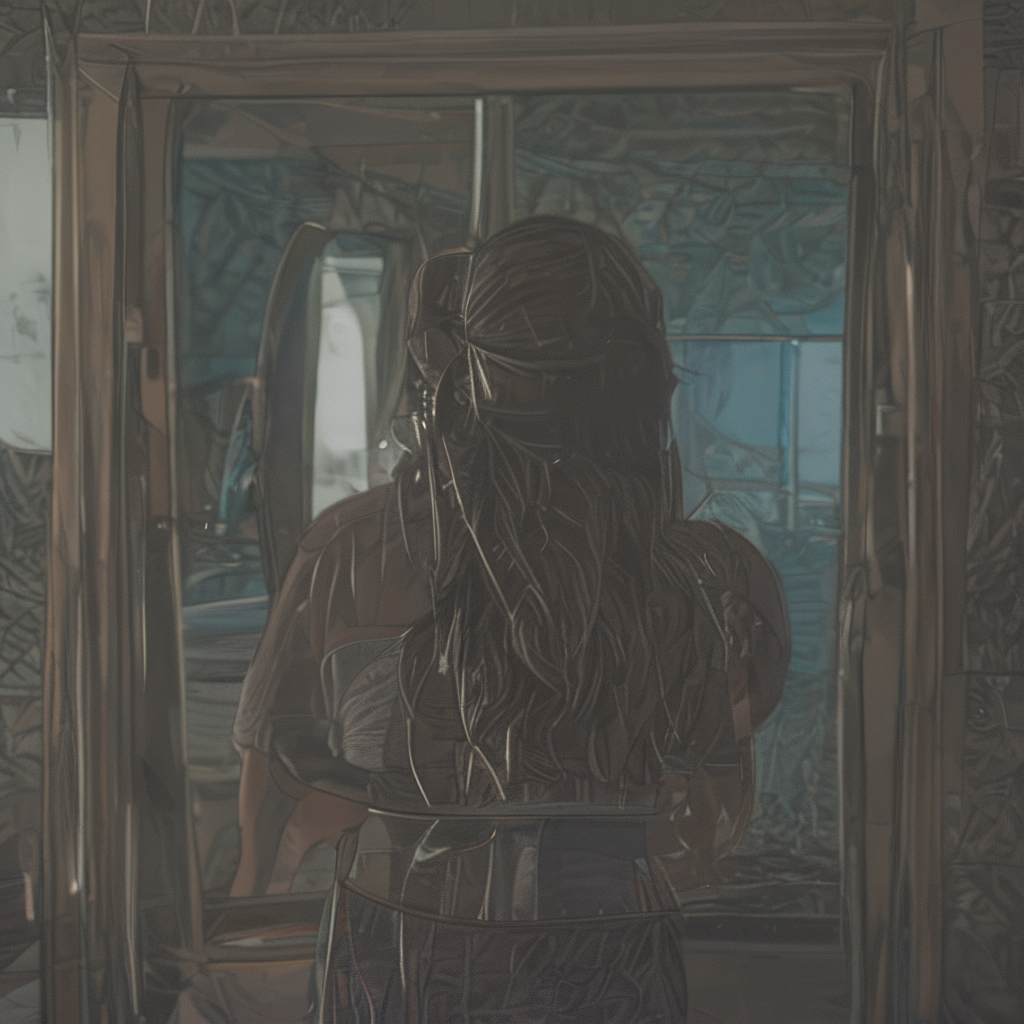}
        \vspace{0.1cm}
        \centerline{\small (b) $\Delta r = 4$}
    \end{minipage}\hfill
    \begin{minipage}{0.32\linewidth}
        \centering
        \includegraphics[width=\linewidth]{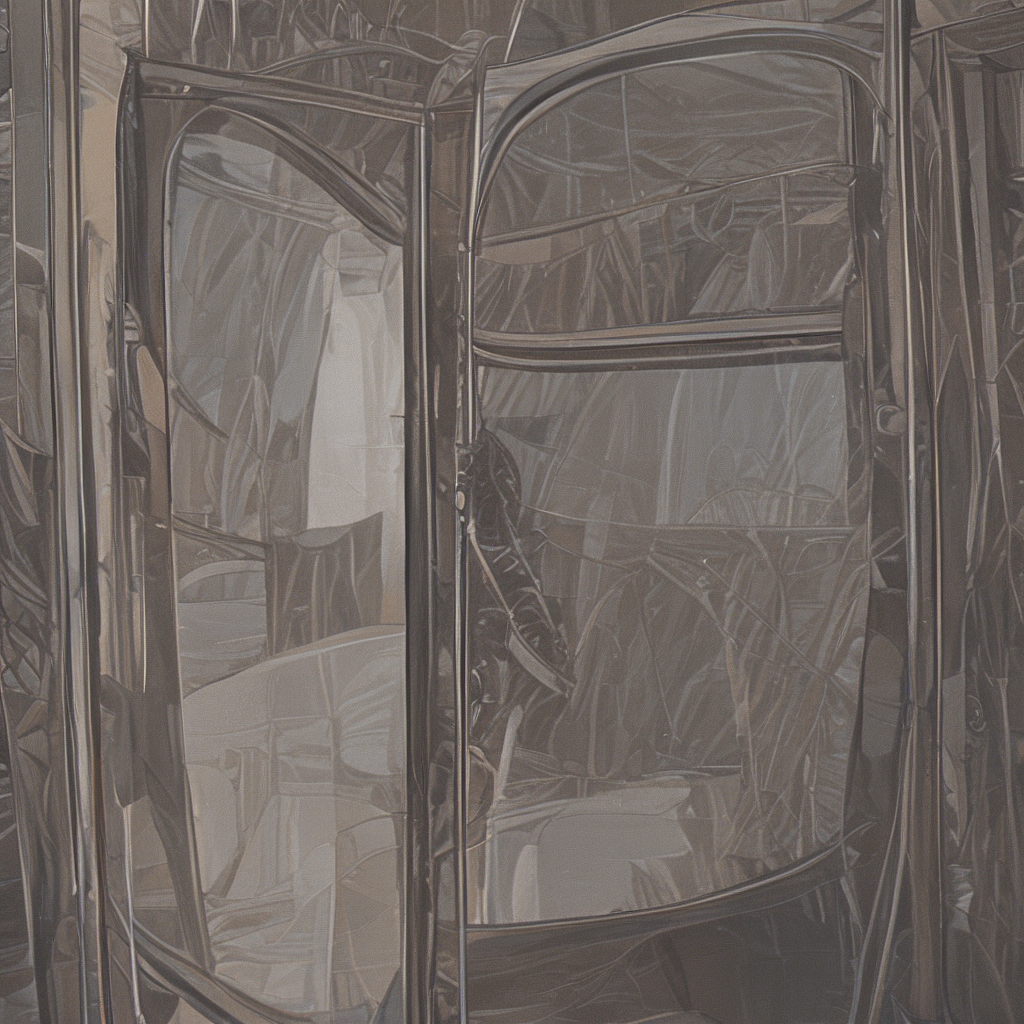}
        \vspace{0.1cm}
        \centerline{\small (c) $\Delta r = 16$}
    \end{minipage}
    
    \vspace{0.2cm}
    \caption{Effect of the truncation rank ($\Delta r$) on a highly compact DiT model (Sana-0.6B). Prompts: ``A trio of dogs sitting in their owner's lap in a red convertible.'', ``An airplane flying high in the blue sky.'', ``A woman stands in front of the mirror to take a picture.''}
    \label{fig:sana_ablation}
\end{figure}

\section{Effect of Varying the $\Delta r$ on Fine-Tuning Attack}

This section reports additional quantitative results and visualizations for the ``Fine-Tuning Attack (Q2)" experiment (Table~\ref{tab:fine-tuning-delta-r}). We further vary $\Delta r$ to illustrate how the truncation strength affects recoverability under fine-tuning. CLIP scores are measured after one epoch of fine-tuning with varying dataset sizes. We also report the corresponding parameter and computational overhead for each $\Delta r$. Unless otherwise stated, we use $\Delta r = 4$.

\begin{table*}[!ht]
\setlength\tabcolsep{4.5pt}
\caption{Fine-tuning attack resilience with varying the $\Delta r$ on Stable Diffusion. The last row shows the result of baseline Stable Diffusion for comparison. (Prompt: ``A trio of dogs sitting in their owner's lap in a red convertible.'') \label{tab:fine-tuning-delta-r}}
\begin{center}
\begin{tabular}{c|c:c:c:c:c}
\hline
\multirow{2}{*}{$\Delta r$} & \multicolumn{5}{c|}{CLIP score}\\
\cdashline{2-6}
 & Protected & 0.1k Data & 1k Data & 10k Data & 100k Data\\
\hline
\multirow{2}{*}{\raisebox{-5.5\height}{1}} &
$\makecell{0.125\ \scriptstyle(\pm 0.031)}$ &
$\makecell{0.146\ \scriptstyle(\pm 0.031)}$ &
$\makecell{0.193\ \scriptstyle(\pm 0.039)}$ &
$\makecell{0.242\ \scriptstyle(\pm 0.032)}$ &
$\makecell{0.249\ \scriptstyle(\pm 0.033)}$ \\
 &
\includegraphics[width=0.13\linewidth]{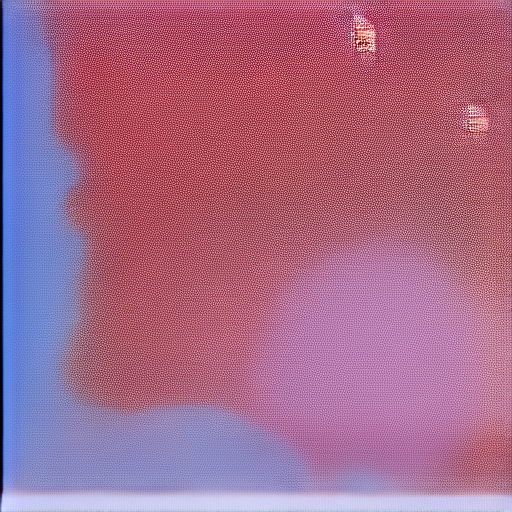} &
\includegraphics[width=0.13\linewidth]{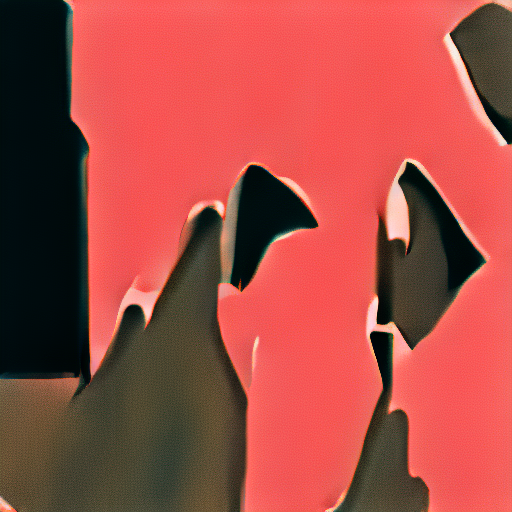} &
\includegraphics[width=0.13\linewidth]{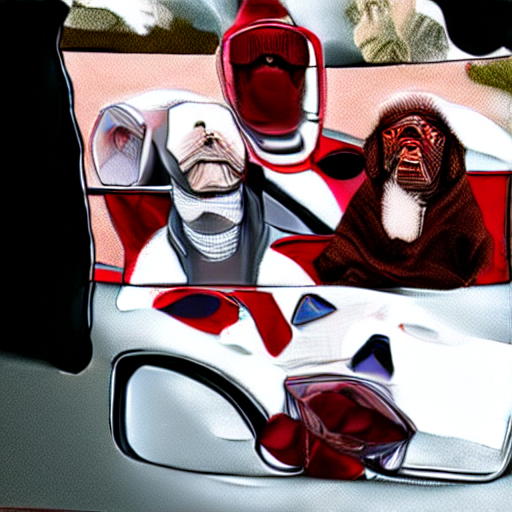} &
\includegraphics[width=0.13\linewidth]{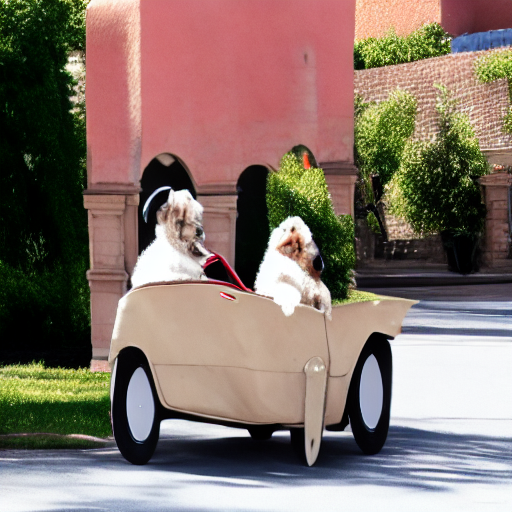} &
\includegraphics[width=0.13\linewidth]{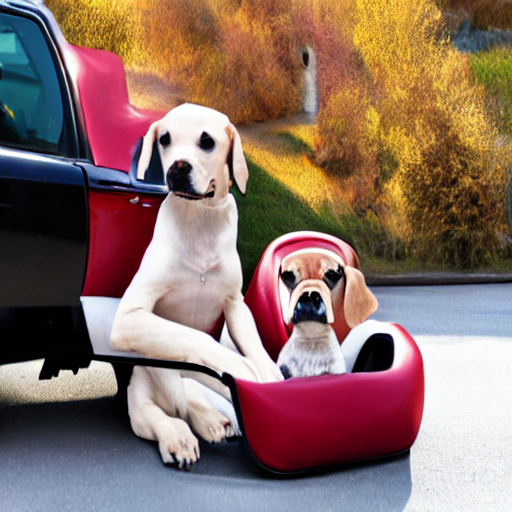} \\
\hdashline
\multirow{2}{*}{\raisebox{-5.5\height}{4}} &
$\makecell{0.118\ \scriptstyle(\pm 0.034)}$ &
$\makecell{0.137\ \scriptstyle(\pm 0.035)}$ &
$\makecell{0.159\ \scriptstyle(\pm 0.032)}$ &
$\makecell{0.211\ \scriptstyle(\pm 0.040)}$ &
$\makecell{0.231\ \scriptstyle(\pm 0.034)}$ \\
&
\includegraphics[width=0.13\linewidth]{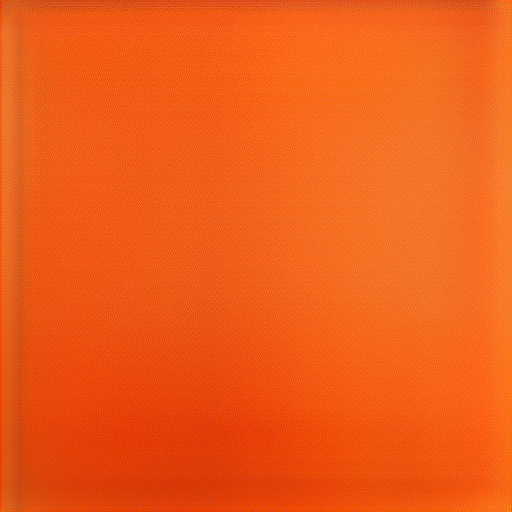} &
\includegraphics[width=0.13\linewidth]{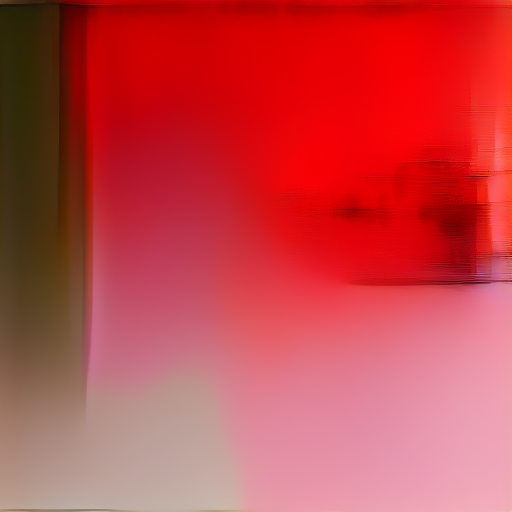} &
\includegraphics[width=0.13\linewidth]{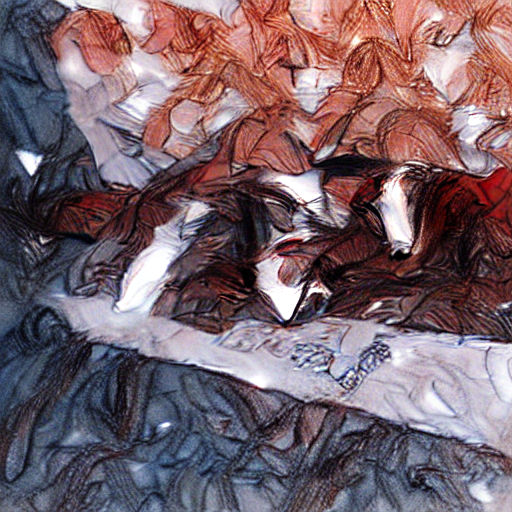} &
\includegraphics[width=0.13\linewidth]{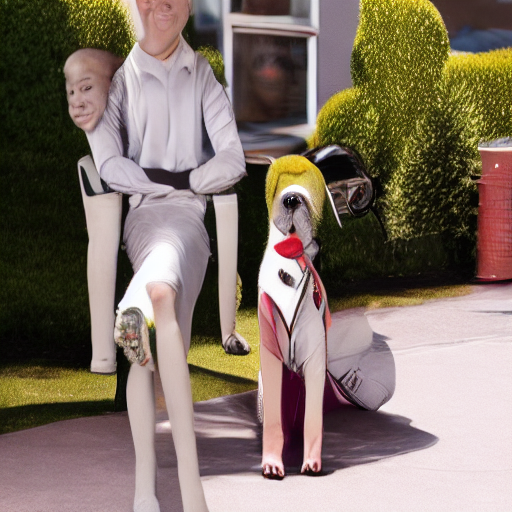} &
\includegraphics[width=0.13\linewidth]{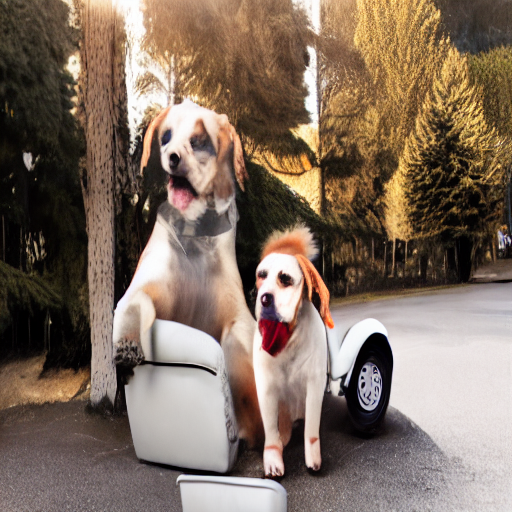} \\
\hdashline
\multirow{2}{*}{\raisebox{-5.5\height}{16}} &
$\makecell{0.127\ \scriptstyle(\pm 0.035)}$ &
$\makecell{0.129\ \scriptstyle(\pm 0.032)}$ &
$\makecell{0.147\ \scriptstyle(\pm 0.033)}$ &
$\makecell{0.201\ \scriptstyle(\pm 0.034)}$ &
$\makecell{0.220\ \scriptstyle(\pm 0.030)}$ \\
 &
\includegraphics[width=0.13\linewidth]{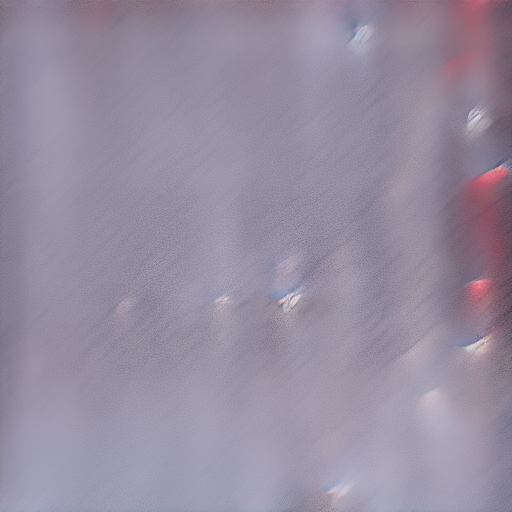} &
\includegraphics[width=0.13\linewidth]{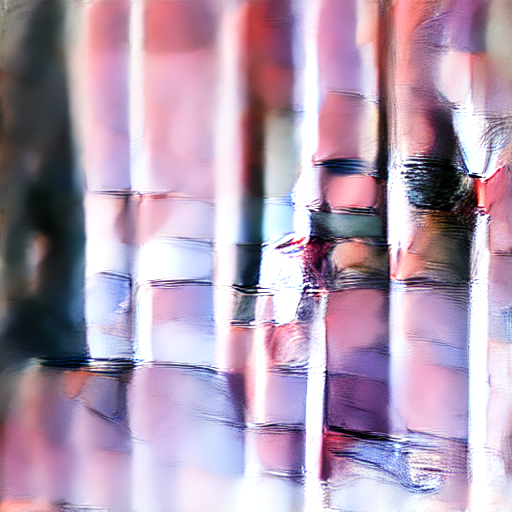} &
\includegraphics[width=0.13\linewidth]{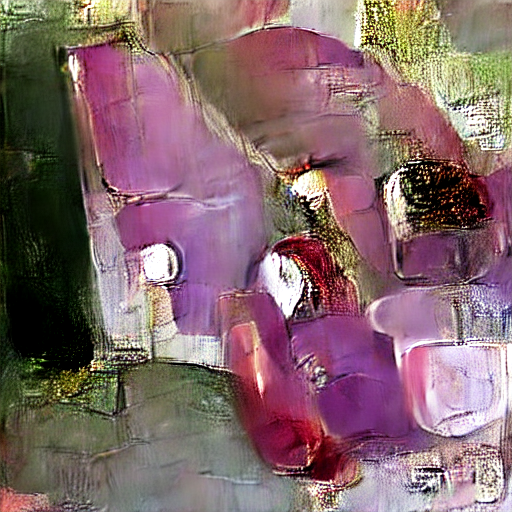} &
\includegraphics[width=0.13\linewidth]{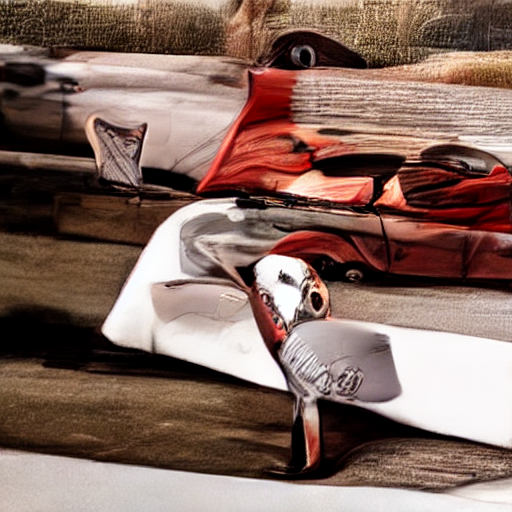} &
\includegraphics[width=0.13\linewidth]{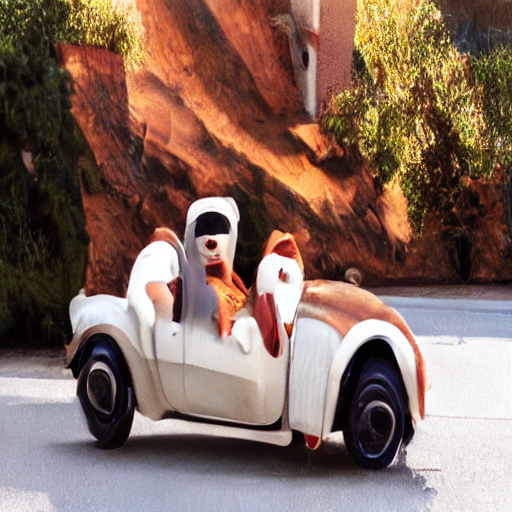} \\
\hdashline
\multirow{2}{*}{\raisebox{-5.5\height}{64}} &
$\makecell{0.132\ \scriptstyle(\pm 0.029)}$ &
$\makecell{0.125\ \scriptstyle(\pm 0.031)}$ &
$\makecell{0.135\ \scriptstyle(\pm 0.028)}$ &
$\makecell{0.169\ \scriptstyle(\pm 0.033)}$ &
$\makecell{0.187\ \scriptstyle(\pm 0.036)}$ \\
 &
\includegraphics[width=0.13\linewidth]{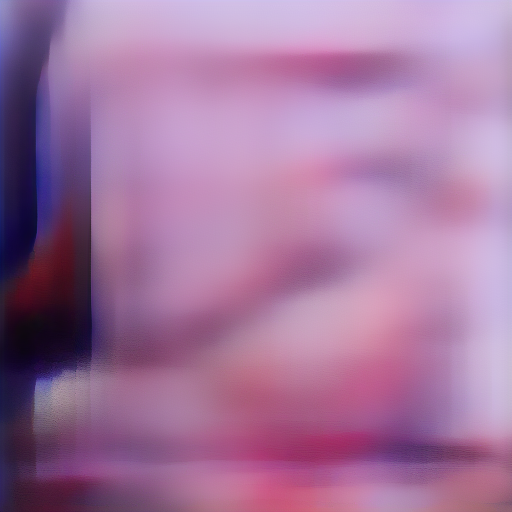} &
\includegraphics[width=0.13\linewidth]{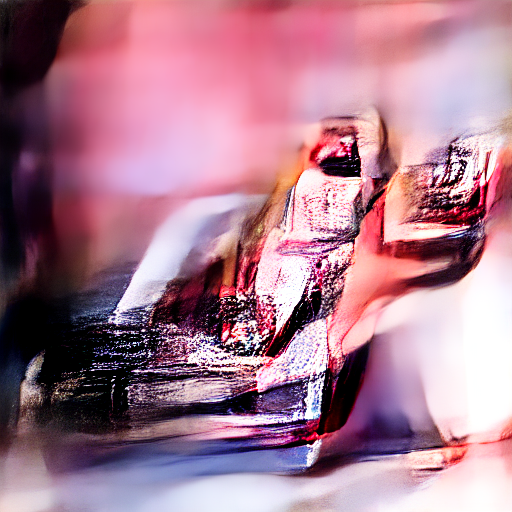} &
\includegraphics[width=0.13\linewidth]{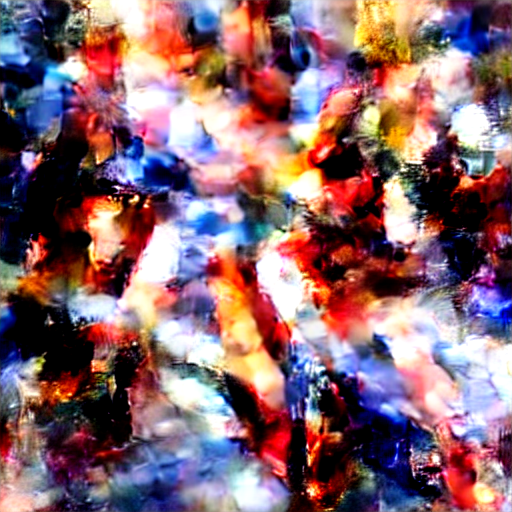} &
\includegraphics[width=0.13\linewidth]{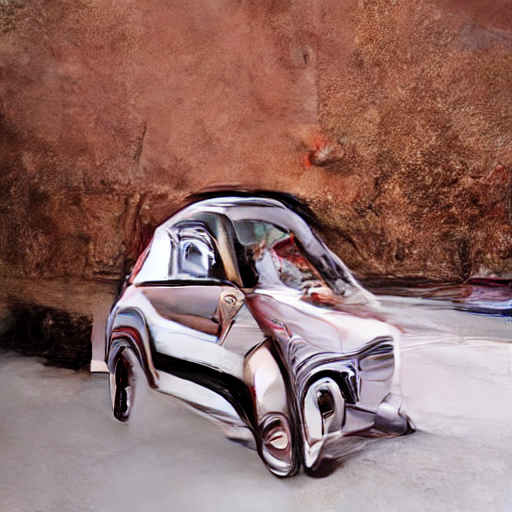} &
\includegraphics[width=0.13\linewidth]{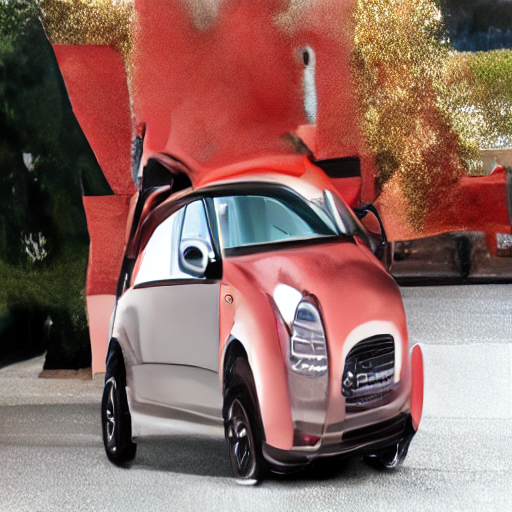} \\
\hline
\multirow{2}{*}{\raisebox{-5.5\height}{Baseline}} &
$\makecell{0.267\ \scriptstyle(\pm 0.032)}$ &
 &
 &
 &
 \\
 & \includegraphics[width=0.13\linewidth]{fig/images_LoREnc_example/0_-1.png} & & & \\
\hline
\end{tabular}
\end{center}
\end{table*}

\section{Pseudo-code of LoREnc} \label{appendix:pseudocode}
Algorithm~\ref{alg:lorenc} presents Python-style pseudocode for the proposed LoREnc framework.

\begin{algorithm*}[!t]
   \caption{Python-style pseudocode for LoREnc}
   
   \label{alg:lorenc}
   
    \definecolor{codeblue}{rgb}{0.25, 0.5, 0.5}
    \definecolor{codekw}{rgb}{0.85, 0.18, 0.50}
    \lstset{
  backgroundcolor=\color{white},
  basicstyle=\fontsize{7.1pt}{7.1pt}\ttfamily\selectfont,
  columns=fullflexible,
  breaklines=true,
  captionpos=b,
  commentstyle=\fontsize{7.1pt}{7.1pt}\color{codeblue},
  keywordstyle=\fontsize{7.1pt}{7.1pt}\color{codekw},
}
\begin{lstlisting}[language=python]
# Inputs:
#   W: Original FM weight matrix
#   {A_k, B_k}: Set of K downstream LoRA adapters
#   delta_r: Rank for Spectral Truncation
#   r: Rank of original LoRA adapters

#######################################
### Phase 1: Spectral Truncation    ###
#######################################

# 1. Extract the Spectral Key (L) via TSVD
# Note: kept _FM notation to indicate source
U_FM, S_FM, Vh_FM = TSVD(W, rank=delta_r)

# L represents the dominant low-rank structure of FM
L = U_FM @ diag(S_FM) @ Vh_FM

# 2. Truncate FM weights (Obfuscation)
W_tilde = W - L  # Deploy W_tilde to edge device


#######################################
### Phase 2: Spectral Compensation  ###
#######################################

for k in range(K):
    # Inject spectral components into adapters (Rank Expansion)
    # B_k: [m, r] -> [m, r + delta_r]
    # A_k: [r, n] -> [r + delta_r, n]
    
    # Distribute singular values symmetrically
    S_sqrt = diag(S_FM) ** 0.5
    
    # Concatenate FM spectral components to LoRA factors
    # B_comp: [B_k | U_FM * S_sqrt]
    B_comp[k] = concat([B[k], U_FM @ S_sqrt], axis=1)
    
    # A_comp: [A_k / (S_sqrt * Vh_FM)] (stacked vertically)
    A_comp[k] = concat([A[k], S_sqrt @ Vh_FM], axis=0)


#######################################
### Phase 3: Secure Adapter Encoding ###
#######################################

for k in range(K):
    # 1. Generate Restoration Keys via SVD on compensated adapter
    #    Note: Use _Lo (or _Enc) to distinguish from FM components
    U_Lo, S_Lo, Vh_Lo = SVD(B_comp[k] @ A_comp[k])

    # 2. Split into Restoration Key (delta_r) and Encrypted Adapter (r)
    S_Lo_sqrt = diag(S_Lo) ** 0.5
    
    # K_B: Restoration Key, B_enc: Encrypted LoRA B
    K_B[k], B_enc[k] = split(
        U_Lo @ S_Lo_sqrt, 
        split_sizes=[delta_r, r], axis=1
    )
    
    # K_A: Restoration Key, A_enc: Encrypted LoRA A
    K_A[k], A_enc[k] = split(
        S_Lo_sqrt @ Vh_Lo, 
        split_sizes=[delta_r, r], axis=0
    )

    # 3. Orthogonal Reparameterization
    M = random_orthogonal_matrix(r)  # size: r x r
    
    B_final[k] = B_enc[k] @ M        # Encrypted B
    A_final[k] = M.T @ A_enc[k]      # Encrypted A

# Output:
#   W_tilde: Truncated FM
#   {B_final, A_final}: Encrypted Adapters
#   {K_B, K_A}: Restoration Keys
return W_tilde, (B_final, A_final), (K_B, K_A)

\end{lstlisting}
\end{algorithm*}

\end{document}